\documentclass[conference]{IEEEtran}
\IEEEoverridecommandlockouts
\IEEEaftertitletext{\vspace{-1\baselineskip}}

\AtBeginDocument{%
  \providecommand\BibTeX{{%
    \normalfont B\kern-0.5em{\scshape i\kern-0.25em b}\kern-0.8em\TeX}}}

\usepackage[numbers,sort&compress]{natbib}
\usepackage[utf8]{inputenc} 
\usepackage[T1]{fontenc}    
\usepackage{microtype}      
\usepackage{enumitem}
\usepackage{booktabs}
\usepackage{graphicx}
\usepackage{amsmath,amssymb,amsfonts,amsthm}
\usepackage{url}
\usepackage{balance}
\usepackage{color}
\usepackage[caption=false,labelformat=simple]{subfig}
\usepackage[linesnumbered,ruled,vlined,norelsize]{algorithm2e}

\SetCommentSty{textsf}

\usepackage{xcolor}
\usepackage[unicode]{hyperref}
\hypersetup{
	pdfstartview=FitH,
	bookmarks=true,
	bookmarksnumbered=true,
	bookmarksopen=true,
	colorlinks,
	linkcolor={violet!90!black},
	citecolor={blue!50!black},
	urlcolor={blue!80!black}
}

\newtheorem{theorem}{Theorem}
\newtheorem{example}{Example}
\newtheorem{lemma}{Lemma}
\newtheorem{definition}{Definition}

\newcommand{\abs}[1]{\lvert#1\rvert}
\newcommand{\spara}[1]{\vspace{2mm}\noindent\textbf{#1.}}

\newcommand{\topdown}{\textsf{TopDown}\xspace}
\newcommand{\igs}{\textsf{WIGS}\xspace}
\newcommand{\migs}{\textsf{MIGS}\xspace}
\newcommand{\MIGS}{\textsf{MIGS}\xspace}
\newcommand{\IGS}{\textsf{FrameworkIGS}\xspace}

\newcommand{\greedyN}{\textsf{GreedyNaive}\xspace}
\newcommand{\greedyT}{\textsf{GreedyTree}\xspace}
\newcommand{\greedyG}{\textsf{GreedyDAG}\xspace}
\newcommand{\getp}{\textsf{GetReachableSetWeight}\xspace}
\newcommand{\getpDFS}{\textsf{SetWeightDFS}\xspace}
\newcommand{\adjust}{\textsf{AdjustWeight}\xspace}
\newcommand{\done}{\hspace*{\fill} {$\square$}}

\DeclareMathOperator*{\argmax}{\mathop{\arg\max}}
\DeclareMathOperator*{\argmin}{\mathop{\arg\min}}

\newcommand{\compilehidecomments}{false}
\ifthenelse{ \equal{\compilehidecomments}{true} }{%
	\newcommand{\jing}[1]{}
}{
	\newcommand{\jing}[1]{{\color{blue}  [\text{Jing:} #1]}}
}
\ifthenelse{ \equal{\compilehidecomments}{true} }{%
	\newcommand{\cqh}[1]{}
}{
	\newcommand{\cqh}[1]{{\color{blue}  [\text{cqh:} #1]}}
}

\setlength{\textfloatsep}{10pt plus 1.0pt minus 2.0pt}
\setlength{\floatsep}{10pt plus 1.0pt minus 2.0pt}
\setlength{\dbltextfloatsep}{10pt plus 1.0pt minus 2.0pt}
\setlength{\dblfloatsep}{10pt plus 1.0pt minus 2.0pt}

\title{Cost-Effective Algorithms for Average-Case Interactive Graph Search\thanks{A short version of the paper will appear in the 38th IEEE International Conference on Data Engineering (ICDE '22), May 9--12, 2022.}}

\makeatletter
\newcommand{\linebreakand}{%
\end{@IEEEauthorhalign}
\hfill\mbox{}\par
\mbox{}\hfill\begin{@IEEEauthorhalign}
}
\makeatother

\author{
    \IEEEauthorblockN{Qianhao Cong}
	\IEEEauthorblockA{Dept.\ of Ind.\ Syst.\ Engg.\ \& Mgmt.\\
		National University of Singapore\\
		cong\_qianhao@u.nus.edu}
	\and
	\IEEEauthorblockN{Jing Tang\IEEEauthorrefmark{1}\thanks{\IEEEauthorrefmark{1}Corresponding author: Jing Tang.}}
	\IEEEauthorblockA{Data Science and Analytics Thrust\\
		The Hong Kong Uni.\ of Sci.\ and Tech.\\
		jingtang@ust.hk}
	\and
	\IEEEauthorblockN{Yuming Huang}
	\IEEEauthorblockA{Dept.\ of Ind.\ Syst.\ Engg.\ \& Mgmt.\\
		National University of Singapore\\
		huangyuming@u.nus.edu}
	\linebreakand
	\IEEEauthorblockN{Lei Chen}
	\IEEEauthorblockA{Data Science and Analytics Thrust\\
        The Hong Kong Uni.\ of Sci.\ and Tech.\\
		leichen@ust.hk}
	\and
	\IEEEauthorblockN{Yeow Meng Chee}
	\IEEEauthorblockA{Dept.\ of Ind.\ Syst.\ Engg.\ \& Mgmt.\\
		National University of Singapore\\
		ymchee@nus.edu.sg}
}

\begin{document}

\maketitle

\thispagestyle{plain}
\pagestyle{plain}

\begin{sloppy}
\begin{abstract}
Interactive graph search (IGS) uses human intelligence to locate the target node in hierarchy, which can be applied for image classification, product categorization and searching a database. Specifically, IGS aims to categorize an object from a given category hierarchy via several rounds of interactive queries. In each round of query, the search algorithm picks a category and receives a boolean answer on whether the object is under the chosen category. The main efficiency goal asks for the minimum number of queries to identify the correct hierarchical category for the object. In this paper, we study the {\em average-case interactive graph search (AIGS)} problem that aims to minimize the expected number of queries when the objects follow a probability distribution. We propose a greedy search policy that splits the candidate categories as evenly as possible with respect to the probability weights, which offers an approximation guarantee of $O(\log n)$ for AIGS given the category hierarchy is a directed acyclic graph (DAG), where $n$ is the total number of categories. Meanwhile, if the input hierarchy is a tree, we show that a constant approximation factor of ${(1+\sqrt{5})}/{2}$ can be achieved. Furthermore, we present efficient implementations of the greedy policy, namely \greedyT and \greedyG, that can quickly categorize the object in practice. Extensive experiments in real-world scenarios are carried out to demonstrate the superiority of our proposed methods.
\end{abstract}
\section{Introduction}
Crowdsourcing services, such as Amazon's Mechanical Turk and CrowdFlower, allow users to generate tasks for crowd workers to complete in exchange for rewards. The main goal of crowdsourcing is to solve problems that are hard for computers. During the past decades, crowdsourcing services have emerged as a powerful mechanism to process data for many tasks, such as object categorization~\cite{parameswaran2011human,li2020efficient}, data-labeling and dataset creation~\cite{deng2009imagenet,lin2014microsoft}, entity resolution~\cite{wang2012crowder,vesdapunt2014crowdsourcing}, data filtering~\cite{parameswaran2012crowdscreen,parameswaran2014optimal}, and SQL-like query processing~\cite{li2017cdb,karger2011human,davidson2013using,guo2012so,venetis2012max}.

Recently, \citet{Tao:IGS} proposed the interactive graph search (IGS) problem that aims to find an initially unknown target node on a direct acyclic graph (DAG) by asking questions like ``is the target node reachable from a given node $u$?''.  In the beginning, all nodes on the DAG are considered as candidates. By sequentially asking questions, we can gradually narrow down the candidates and finally identify the target node. In real-world scenarios, these questions may be answered through a crowdsourcing platform. Meanwhile, crowdsourcing platforms usually charge a flat fee for each question, i.e., a unit cost per question. Therefore, a natural aim is to reduce the number of queries needed for finding the target node. There are many real-world applications~\cite{parameswaran2011human} that can be characterized by the IGS problem, such as image classification~\cite{Tao:IGS}, product categorization~\cite{li2020efficient}, debugging of workflows~\cite{heinis2008efficient,cohen2006towards}, etc. As an immediate illustration, the following example shows the task of labeling an image by crowdsourcing~\cite{parameswaran2011human}.

\begin{figure}[!tbp]
	\centering
	\includegraphics[width=0.45\textwidth]{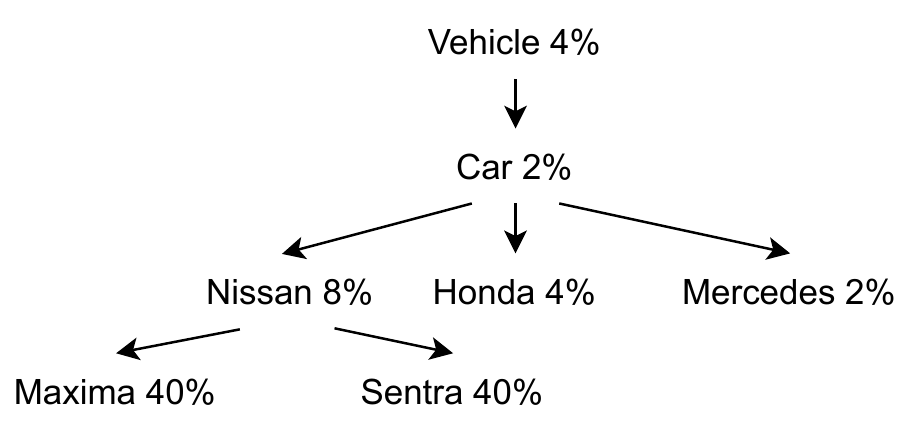}
	\vspace{-3mm}
	\caption{Image categorization~\cite{parameswaran2011human}.}
	\label{fig:ExampleCategoryIdentification}
\end{figure}

\begin{example}[Image Categorization]\label{exp:ImageIdentification}
In (supervised) machine learning, algorithms usually require labeled data to train the models. Image category identification via crowdsourcing is an important channel to obtain such labeled data. In particular, humans are hired to label images according to a given hierarchy by answering a sequence of questions. 
Consider the scenario that we attempt to label a vehicle image given a hierarchy shown in \figurename~\ref{fig:ExampleCategoryIdentification}, where the proportions of different vehicle categories are different. Note that the target can be either a leaf node or a non-leaf node in the hierarchy. For example, with a probability of $40\%$, the image shows a \textit{Sentra}, while with another probability of $8\%$, the image is a \textit{Nissan} but neither a \textit{Maxima} nor a \textit{Sentra}.

We may first ask a question like ``is this a \textit{car}?''. If the answer is \textit{no}, we can immediately label the image as a non-car \textit{vehicle}. If the answer is \textit{yes}, we continue the search process in the tree rooted at the node of \textit{car}. Asking sequential questions in a similar way, we can finally find the most suitable label to describe the image. Suppose that we receive a \textit{yes}-answer to both questions of ``is this a \textit{car}?'' and ``is this a \textit{Honda}?''. Then, a label of \textit{Honda} is placed on the image.\done
\end{example}

A straightforward solution is a \topdown method that asks questions from the root of the hierarchy. Specifically, \topdown first queries every child of the root until it gets a \textit{yes}-answer. If no such a child exists, the root will be returned as the target node. Otherwise, such a child will be set as the new root to repeat the process. Take \figurename~\ref{fig:ExampleCategoryIdentification} as an illustration and assume that \textit{Sentra} is the target node. The \topdown algorithm first asks ``is this a \textit{car}?'' and receives a \textit{yes}-answer. Then, it asks ``is this a \textit{Nissan}?'' and again gets a \textit{yes}-answer. It further asks ``is this a \textit{Maxima}?'' and this time a \textit{no}-answer is returned. Finally, it asks ``is this a \textit{Sentra}'' with a \textit{yes}-answer so that the target node is identified successfully. This strategy can correctly find the target node but unfortunately is cost-ineffective. Since we know that most of the images are \textit{Maxima} or \textit{Sentra}, a smarter strategy can query these two nodes first so that with a probability of $80\%$, asking at most two questions can identify the target node. This observation motivates us to develop cost-effective algorithms to minimize the cost.
 
\citet{Tao:IGS} studied the \textit{worst-case} interactive graph search (WIGS) that aims to minimize the \textit{maximum} number of queries required to identify every possible target node and proposed a series of heavy-path-based binary searches with near-optimal theoretical guarantees. However, for the task of image categorization by crowdsourcing, the data owner hires humans to label the images with a desire of minimizing the total cost of labeling all images, rather than minimizing the maximum cost of labeling one image. 
\begin{example}[Query Cost]\label{exp:cost}
    Consider that there are $100$ images with proportions as given in \figurename~\ref{fig:ExampleCategoryIdentification}, e.g., $4$ images are non-car \textit{vehicles} and $40$ images are \textit{Maxima}. An optimal solution to WIGS sequentially queries \textit{Nissan}, \textit{Car}, \textit{Honda}, and \textit{Mercedes} with answers of \textit{no}, \textit{yes}, \textit{no}, and \textit{yes} returned, assuming that the image is a \textit{Mercedes}. In fact, this is the worst case for such a solution requiring $4$ queries. Similarly, the numbers of queries asked are $2$, $4$, $3$, $3$, $2$ and $3$ if the correct categories are \textit{Vehicle}, \textit{Car}, \textit{Honda}, \textit{Nissan}, \textit{Maxima} and \textit{Sentra}, respectively. As a result, the total cost is $260$.
    
    On the other hand, consider an alternative solution that sequentially queries \textit{Maxima}, \textit{Sentra}, \textit{Nissan}, \textit{Car}, \textit{Honda}, and \textit{Mercedes} with a cost of $6$, if the image is again a \textit{Mercedes}. Meanwhile, if the correct categories are \textit{Vehicle}, \textit{Car}, \textit{Honda}, \textit{Nissan}, \textit{Maxima} and \textit{Sentra}, the costs become $4$, $6$, $5$, $3$, $1$ and $2$, respectively. Hence, the total cost is $204$. Apparently, in terms of the worst-case cost, this solution requiring $6$ queries is inferior to the optimal solution to WIGS requiring $4$ queries only, whereas in terms of the average-case cost, the former (with a cost of $2.04$ in average) remarkably outperforms the latter (with a cost of $2.6$ in average).\done
\end{example}


Motivated by the deficiency of existing solutions, in this paper, we focus on the \textit{average-case} interactive graph search (AIGS) problem that attempts to minimize the \textit{expected} number of queries when the target node follows a-priori known probability distribution. It is trivial to see that minimizing the total cost of categorizing a batch of objects is factually equivalent to AIGS, assuming that we know the distribution of the target nodes in this batch of tasks. Note that in some real applications, it may be impossible to get the true data distribution. For such scenarios, we may apply a simple online learning approach that dynamically adjusts the empirical probability distribution on the fly upon obtaining the category result of each object. Usually, the empirical statistics can well characterize the real data distribution as long as a sufficiently large number of labels are obtained.

We find that obtaining the optimal solution for AIGS is computationally intractable. To address AIGS, we propose a greedy search policy that picks a query node to split the candidate nodes as evenly as possible, taking into account the probability weight of each node. With a rigorous and thorough analysis, we show that the greedy policy can achieve strong theoretical guarantees, especially a constant factor given the input hierarchy is a tree. In addition, a naive implementation of the greedy policy is to evaluate the splitting size of every candidate node in each round, which is time-consuming. To tackle this issue, we develop efficient algorithms for variant scenarios, namely \greedyT and \greedyG, to accelerate the query node selection according to the greedy policy.

In summary, we make the following contributions.
\begin{itemize}[topsep=2mm, partopsep=0pt, itemsep=1mm, leftmargin=18pt]
	\item We show that computing the optimal policy for AIGS is NP-hard, even if the hierarchy has a tree structure.
	\item We propose a cost-effective approach for AIGS. We show that our greedy policy offers an approximation ratio of $O(\log n)$ for any input DAG with $n$ nodes and a constant factor of $(1+\sqrt{5})/2$ when the input hierarchy is a tree (Section~\ref{sec:analysis}).
	\item We devise efficient implementations of the greedy policy, incorporating several acceleration techniques. In particular, our \greedyT and \greedyG algorithms run in $O(nhd)$ and $O(nm)$ time for tree and DAG hierarchies respectively, improving the naive $O(n^2m)$ time algorithm significantly, where $n$ and $m$ are the number of nodes and edges, $h$ is the length of longest path, and $d$ is the maximum out-degree of nodes in the hierarchy. (Section~\ref{sec:algorithm}).
	\item We conduct extensive experiments with real-world datasets to evaluate our proposed methods, and the experimental results strongly corroborate the effectiveness and efficiency of our approach (Section~\ref{sec:experiment}).
\end{itemize}
\section{Problem Definition}\label{sec:preliminary}
In this paper, we study the problem of \textit{average-case interactive graph search (AIGS)}. That is, we sequentially ask the crowd some simple reachability questions with boolean answers, i.e.,~\textit{yes} or \textit{no}, to gradually narrow down the potential categories of the object. Meanwhile, the cost of crowdsourcing is usually determined by the number of questions asked. Therefore, our aim is to correctly identify the hierarchical category by asking the minimum \textit{expected} number of questions, i.e., minimizing the \textit{expected} cost. 
For ease of reference, Table~\ref{tab:notations} summarizes the notations that are frequently used.

 \begin{table}[!tbp]
 	\centering
 	\setlength{\tabcolsep}{0.6em} 
 	\caption{Frequently used notations.}
 	\label{tab:notations}
 	\begin{tabular}{cl}
 		\toprule
 		\textbf{Notation} & \textbf{Description}\\
 		\midrule
 		$G=(V,E)$ & a DAG $G$ with node set $V$ and edge set $E$\\
 		$n$ & the number of nodes in $G$, i.e., $n=\lvert V\rvert$\\
 		$m$ & the number of edges in $G$, i.e., $m=\lvert E\rvert$\\
 		$z$ & target node, e.g., the category of an unlabeled object\\
 		$reach(u)$ & query on node $u$, i.e., whether $z$ is reachable from $u$ in $G$\\
 		$G_u$ & the subgraph of $G$ rooted at node $u$\\
 		\bottomrule
 	\end{tabular}
 \end{table}

Specifically, AIGS is formally defined as follows. We abstract a category hierarchy as a directed acyclic graph (DAG) $G=(V,E)$ with a set $V$ of $n$ nodes and a set $E$ of $m$ edges. We assume that there is only one \textit{root} in $G$. If there are multiple roots, we can simply add a dummy node to $G$ with an outgoing edge to every original root, which generates a new DAG with one root only. Given an unknown \textit{target node} (e.g., an unlabeled object) $z\in V$, there is an \textit{oracle} that can answer questions. For any \textit{query node} $q\in V$, the oracle returns a boolean answer, denoted as $reach(q)$, as follows,
\begin{equation*}
	reach(q)=
	\begin{cases}
	yes, &\text{if there is a directed path from $q$ to $z$,}\\
	no, &\text{otherwise.}
	\end{cases}
\end{equation*}
It is easy to see that $reach(q)=yes$ if and only if $q$ can \textit{reach} $z$, which characterizes the \textit{reachability} from $q$ to $z$. For IGS, the algorithm interactively picks a query node $u$ and receives a boolean answer $reach(u)$. The algorithm repeats the process until it determines the target node $z$. 

In particular, in each round of interaction, if it gets a \textit{yes}-answer, the target node must be reachable from the query node $u$. That is, if $reach(u)=yes$, it clearly holds that the target node $z\in G_u$ and the search graph is updated to $G_u$, where $G_u$ is the subgraph of $G$ rooted at the query node $u$. Otherwise, if $reach(u)=no$, $G$ is updated by $G\setminus G_u$ as $z\in G\setminus G_u$. The algorithm stops to return the target node when the search graph just has one node. We present the main procedure of IGS in Algorithm~\ref{alg:framework_of_IGS}, referred to as \IGS. As can be seen, the critical task of the algorithm design for IGS lies in the choice of query node (Line~\ref{alg:framework_of_IGS:select_query_node}). 

We define the \textit{cost} as the number of questions the algorithm asks, assuming that each question charges a fixed price (e.g.,~$\$1$ per question). In fact, as shall be discussed in Section~\ref{section:Analysis:CAIGS}, we can extend to a general scenario where the payment for each question is distinct to reveal the difficulties of questions, e.g., hard questions are more expensive than easy questions. A natural goal of IGS is to identify the target node with the minimum cost. In this paper, we consider that each node $v$ is associated with a probability $p(v)$ measuring the likelihood of $v$ being the target node, e.g., a-priori known data distribution. Given a set $S$ of nodes, let $p(S)=\sum_{v\in S}p(v)$ denote the probability of the target node being one node in $S$, e.g.,~$p(V)=1$. Then, we define the \textit{expected cost} of identifying a target node as the expected number of questions asked by the algorithm, taking into account the randomness of the target node. Therefore, we aim to study the interactive question-asking strategies (also known as policies) for the \textit{average-case interactive graph search (AIGS)} problem with the goal of minimizing the expected cost. For convenience, we abuse notation and use the terms ``policy'' and ``algorithm'' interchangeably. 

 \begin{algorithm}[!tbp]
 \setlength{\hsize}{0.94\linewidth}
 	\caption{$\IGS(G)$}
 	\label{alg:framework_of_IGS}
 	\KwIn{an input DAG $G$}
 	\KwOut{the target node $z$ in $G$}
 	\While{$G$ has at least two nodes}{
 		select a query node $u$ in $G$\;\label{alg:framework_of_IGS:select_query_node}
 		\uIf{$reach(u)=yes$}{
 			$G \gets G_u$\label{alg:framework_of_IGS:yes}\tcp*{$G_u$ is the subgraph of $G$ rooted at $u$}
 		}
 		\Else{
 			$G \gets G \setminus G_u$\;\label{alg:framework_of_IGS:no}
 		}
 		
 	}
 	\Return the only node in $G$\;
 \end{algorithm}
 

\begin{definition}[Average-Case Interactive Graph Search]\label{def:problem-AIGS}
	Given a DAG hierarchy $G=(V,E)$ and an (unknown) target node $z\in V$ following the a-priori known probability distribution $p(\cdot)$, AIGS asks for a query policy to identify the target node such that the expected cost is minimized.
\end{definition}
\vspace{-2mm}
\spara{Remark} In the AIGS problem, we assume a-priori known probability distribution. In practice, the probability distribution can be empirically learned from the historical data, e.g., a simple statistical result of the categorized objects. Moreover, when there is a batch of objects (e.g., a batch of unlabeled images) required to be categorized, these objects generally follow the probability distribution. As a result, minimizing the total cost of categorizing a batch of objects, which is the average cost of each object multiplying the number of objects, is equivalent to the optimization problem of AIGS.
\section{Greedy Policy for AIGS}\label{sec:analysis}
In this section, we show that finding the optimal solution for AIGS is computationally intractable and propose a natural greedy policy with provable approximation guarantees.

\subsection{Impossibility Results}
\begin{lemma}\label{the:np_hardness_for_IGS}
The AIGS problem is NP-hard. In fact, there cannot be any $o(\log n)$-approximate algorithm for AIGS unless $\mathrm{NP} \subseteq \mathrm{DTIME}\left(n^{O(\log\log n)}\right)$.
\end{lemma}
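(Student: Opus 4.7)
The plan is a gap-preserving reduction from the Minimum Set Cover problem, invoking Feige's theorem that Set Cover admits no $(1-o(1))\ln n$-approximation unless $\mathrm{NP} \subseteq \mathrm{DTIME}(n^{O(\log\log n)})$. The NP-hardness will follow as a byproduct of the same construction.

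Given a Set Cover instance with universe $U=\{e_1,\ldots,e_n\}$ and sets $S_1,\ldots,S_m$, I would build a DAG $G$ as follows. Create a root $r$; for each set $S_j$, attach a ``set node'' $s_j$ as a child of $r$; for each element $e_i$, create an ``element node'' $u_i$ and add an edge $s_j \to u_i$ whenever $e_i \in S_j$. To make probabilities work out, add one extra sentinel leaf $u_0$ as a second child of $r$ with no other incoming edges. Assign probability $p(u_0) = 1-\varepsilon$ and $p(u_i) = \varepsilon/n$ for $i \geq 1$, with $\varepsilon$ a suitable small parameter (other nodes receive mass $0$). The crucial observation is that a reachability query on $s_j$ returns \emph{yes} iff the target lies in $\{s_j\}\cup\{u_i:e_i\in S_j\}$, so asking set-node queries is, up to the trivial cases, equivalent to testing membership of an unknown element in $S_j$.

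Next I would establish two bounds linking the expected cost of AIGS to the cost of Set Cover. First, any policy $\pi$ must, along the branch that eventually isolates some $u_i$, separate $u_i$ from every other $u_{i'}$; the set nodes queried on the corresponding \emph{no}-path therefore form a collection covering $U\setminus\{e_i\}$, which yields a cover of $U$ after adding one extra set. Averaging over the element targets, this gives $\mathbb{E}[\text{cost}(\pi)] = \Omega\!\left(\varepsilon \cdot \mathrm{OPT}_{SC}\right)$. Conversely, given a set cover of size $k$, I would construct a policy that first queries the sentinel's sibling structure to cheaply dispose of $u_0$, then queries the $k$ sets of the cover, yielding expected cost $O(k)$ after the \emph{yes}-subproblems (which have small probability mass) are dispatched by any trivial strategy. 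Composing these bounds, an $\alpha$-approximation for AIGS transfers to an $O(\alpha)$-approximation for Set Cover on a universe of size $n$, so any $o(\log n)$-approximation for AIGS would contradict Feige's theorem.

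The main obstacle will be tuning $\varepsilon$ and handling the \emph{yes}-branches of queries so that the reduction is truly gap-preserving. In particular, an AIGS algorithm might be tempted to exploit the heavy sentinel by querying element nodes or root-adjacent nodes directly, shortcutting the implicit set cover; I would neutralize this by padding the construction with additional ``dummy'' element copies so that no single query can isolate a significant fraction of probability mass without mimicking a set cover step. The same construction, interpreted as a decision problem (``is there a policy of expected cost $\leq B$?''), also establishes NP-hardness directly, since Set Cover is already NP-hard even on polynomially sized universes.
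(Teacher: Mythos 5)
Your high-level plan (a gap-preserving reduction from Set Cover via Feige's theorem) is a legitimate genre of argument, but it is not what the paper does---the paper proves that IGS is equivalent to searching in a partially ordered set and then invokes the known NP-hardness and $o(\log n)$-inapproximability of average-case poset search due to Cicalese et al.---and, more importantly, your construction has a gap that breaks the reduction. In AIGS every node of the DAG is simultaneously a queryable test and a candidate target, so your heavy sentinel $u_0$ is disposed of by a \emph{single} query on $u_0$ itself ($reach(u_0)=yes$ if and only if the target is $u_0$, which immediately shrinks the candidate graph to $\{u_0\}$). Hence the optimal policy pays roughly $1+\varepsilon\cdot(\cdots)$ irrespective of $\mathrm{OPT}_{SC}$, and no set-cover structure is forced on the heavy branch. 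Your proposed repair---splitting the sentinel mass over many dummy leaves---makes matters worse: sibling leaves hanging off the root can only be separated from one another by querying them individually, so both $\mathrm{OPT}$ and every algorithm incur an additive cost linear in the number of dummies, swamping the $\mathrm{OPT}_{SC}$ signal. This coupling of tests and objects is precisely why the poset/IGS hardness does not follow by naively porting the decision-tree reduction, where the test set is independent of the object set.

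The quantitative composition is also off. The cover you extract from the \emph{no}-answered set queries on the path of a light element target need not cover $U\setminus\{e_i\}$, because \emph{yes}-answered set queries (through their complements) and direct queries on element nodes also separate pairs. Even granting the extraction, a lower bound of $\Omega(\varepsilon\cdot\mathrm{OPT}_{SC})$ against an upper bound of $O(k)$ only turns an $\alpha$-approximation for AIGS into an $O(\alpha/\varepsilon)$-approximation for Set Cover, which is vacuous once $\varepsilon$ is small. The cover must instead be encoded in the identification path of the \emph{heavy-mass} target, so that the lower bound carries a $(1-\varepsilon)$ factor rather than an $\varepsilon$ factor; arranging for such a target to exist inside an IGS instance---one that answers \emph{no} to every set query and cannot be short-circuited by a self-query---is exactly the step your construction is missing. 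You also leave the internal nodes unaccounted for: $r$ and the $s_j$ are candidate targets too, and eliminating them costs queries (identifying $r$ as the target forces a query on every single $s_j$). Given these obstacles, the paper's route---establish the equivalence with poset search and cite the existing hardness results---is the one to follow here.
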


To prove Lemma~\ref{the:np_hardness_for_IGS}, we begin by showing that the IGS problem is equivalent to the search in \textit{partially ordered set (poset)} problem~\cite{daskalakis2011sorting}. We first introduce the definition of poset.

\begin{definition}[Partially Ordered Set]
A partially ordered set is a pair $(V,\le_R)$, where $V$ is a set of objects and $\le_R$ is a partially ordered relation which satisfies the following properties:
\begin{itemize}
    \item Reflexivity: $a \le_R a$.
    \item Antisymmetry: if $a \le_R b$ and $b \le_R a$, then $a=b$.
    \item Transitivity: if $a \le_R b$ and $b \le_R c$, then $a \le_R c$. 
\end{itemize}
\end{definition}

Reflexivity shows that every object is related to itself. Antisymmetry indicates that for any two distinct objects, they cannot relate to each other. Transitivity reflects that the relation is transitive.

\begin{definition}[Search in Poset]
Given an (unknown) target object $z$ in a poset $(V,\le_R)$, the search problem aims to locate the target objects by queries. To search the target object, each query on object $x\in V$ will return a boolean result:
\begin{itemize}
	\item \textit{yes}, if the target object is related to $x$, i.e., $z\le_R x$;
	\item \textit{no}, otherwise.
\end{itemize}
\end{definition}
We make a connection between the IGS problem and the search problem in poset.
\begin{lemma}\label{the:IGS_to_searchinposet_appendix}
	The IGS problem is equivalent to the search problem in a poset.
\end{lemma}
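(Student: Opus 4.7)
\medskip

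\noindent\textbf{Proof proposal.} The plan is to exhibit two reductions that preserve both the underlying ground set and the semantics of each query, so that any query policy for one problem translates verbatim into a query policy for the other with identical cost on every outcome.

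First I would prove the direction IGS $\Rightarrow$ search-in-poset. Given a DAG $G=(V,E)$, define a binary relation $\le_R$ on $V$ by $u \le_R v$ iff there is a directed path from $v$ to $u$ in $G$ (we treat the trivial zero-length path as a path from $v$ to itself). I would verify the three poset axioms: reflexivity is immediate from the zero-length path; antisymmetry follows because $u \le_R v$ and $v \le_R u$ would give a directed path from $u$ to $v$ and one from $v$ to $u$, which in a DAG forces $u=v$ since otherwise we would have a directed cycle; transitivity follows by concatenating paths. With this relation fixed, the IGS query $reach(q)$ returns \emph{yes} precisely when there is a directed path from $q$ to the target $z$, i.e.\ precisely when $z \le_R q$, which is exactly the poset query on $q$. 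Hence every IGS policy is simultaneously a search-in-poset policy on $(V,\le_R)$ with the same number of queries on every realization of $z$.

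For the converse direction, given a poset $(V,\le_R)$ I would construct the DAG $G=(V,E)$ whose edge set is the Hasse diagram: $(u,v) \in E$ iff $v <_R u$ and no $w \in V\setminus\{u,v\}$ satisfies $v <_R w <_R u$. I would argue that the reachability relation of $G$ coincides with $\le_R$: any cover edge $(u,v)$ witnesses $v \le_R u$ and transitivity extends this to every reachability pair, while conversely for any $v \le_R u$ one can iteratively descend through covers (using the antisymmetry and finiteness of $V$) to exhibit a directed path from $u$ to $v$. The DAG property is guaranteed because any directed cycle in $G$ would violate antisymmetry. Consequently the IGS query $reach(q)$ on this $G$ returns \emph{yes} iff $z \le_R q$, matching the poset query.

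The main subtlety, and what I would spell out carefully, is the Hasse-diagram construction: I must show that reachability in the transitive reduction recovers the entire order $\le_R$, which is where the finiteness of $V$ is needed to guarantee the existence of a cover below any strict comparison. A minor bookkeeping point is that the IGS framework as stated in Section~\ref{sec:preliminary} assumes a single root; if the poset has several maximal elements I would invoke the same dummy-root trick noted after Definition of the DAG in the problem statement, adding a fresh $\top$ with $v \le_R \top$ for all $v$ (and a cover edge from $\top$ to every original maximal element), which does not affect any query concerning a real target $z\in V$. Once both reductions are established, the equivalence of the two search problems, and in particular the preservation of expected cost under any distribution on the target, is immediate.
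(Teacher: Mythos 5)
Your proposal is correct and follows essentially the same route as the paper: verify that DAG reachability is a partial order with matching query semantics in one direction, and build the Hasse-diagram/cover-edge DAG from the poset in the other. The only difference is that you spell out the finiteness argument for recovering $\le_R$ from cover edges and the dummy-root adjustment, both of which the paper leaves implicit.
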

\begin{proof}
	The reachability of IGS satisfies the following properties.
	\begin{itemize}
		\item Reflexivity: Each node can reach itself.
		\item Antisymmetry: For two nodes $u,v$, if $u$ can reach $v$ and $v$ can reach $u$, then $u$ should equal to $v$.
		\item Transitivity: For three nodes $u,v,w$, if $u$ can reach $v$ and $v$ can reach $w$, then $u$ can reach $w$ by taking $v$ as a relay node.
	\end{itemize}
	Hence, the reachability relation in the IGS problem is a partially ordered relation. Moreover, given a target node $z\in V$, the query on node $q\in V$ will get a \textit{yes} answer if and only if $z$ is reachable from $q$, which satisfies query in a poset. Therefore, the IGS problem is a search problem in a poset.
	
	On the other hand, given a poset, we can construct a DAG hierarchy as follows. For any two distinct objects $a$ and $b$ in the poset, if $a\leq_R b$ and there does not exist any $c\in V\setminus\{a,b\}$ such that $a\leq_R c$ and $ c\leq_R b$, i.e.,~$a$ is directly related to $b$, then $a$ is the child of $b$ in the hierarchy. It is easy to verify that the constructed hierarchy is a DAG and for two objects $a$ and $b$ satisfying $a\leq_R b$, it must hold that $a$ is reachable from $b$ in the hierarchy. This implies that searching a poset can be represented by an IGS problem.
\end{proof}

Now we are ready to prove Lemma~\ref{the:np_hardness_for_IGS}.

\begin{proof}[Proof of Lemma~\ref{the:np_hardness_for_IGS}]
    \citet{cicalese2011complexity} proved that deriving the optimal solution for the average-case poset search is NP-hard even if the poset has a tree structure (i.e.,~the input hierarchy of AIGS is a tree). Moreover, \citet{cicalese2011complexity} pointed out that for the problem of minimizing the weighted average number of queries to identify an initially unknown object for general posets, there cannot be any $o(\log n)$-approximate algorithm unless $\mathrm{NP} \subseteq \mathrm{DTIME}\left(n^{O(\log\log n)}\right)$, where $n$ is the number of nodes. Combining with Lemma~\ref{the:IGS_to_searchinposet_appendix}, we complete the proof.
\end{proof}

Lemma~\ref{the:np_hardness_for_IGS} implies that in general, it is impossible to construct the optimal solution for AIGS in polynomial time unless $\mathrm{P}\!=\!\mathrm{NP}$. Even worse, devising a good approximate algorithm for AIGS within a factor of $o(\log n)$ is also impossible. In the following, we propose a greedy algorithm that can provide provable theoretical guarantees.

\begin{figure*}[!hbpt]
	\centering
	\subfloat[An Input Hierarchy]{\includegraphics[height=0.145\textheight]{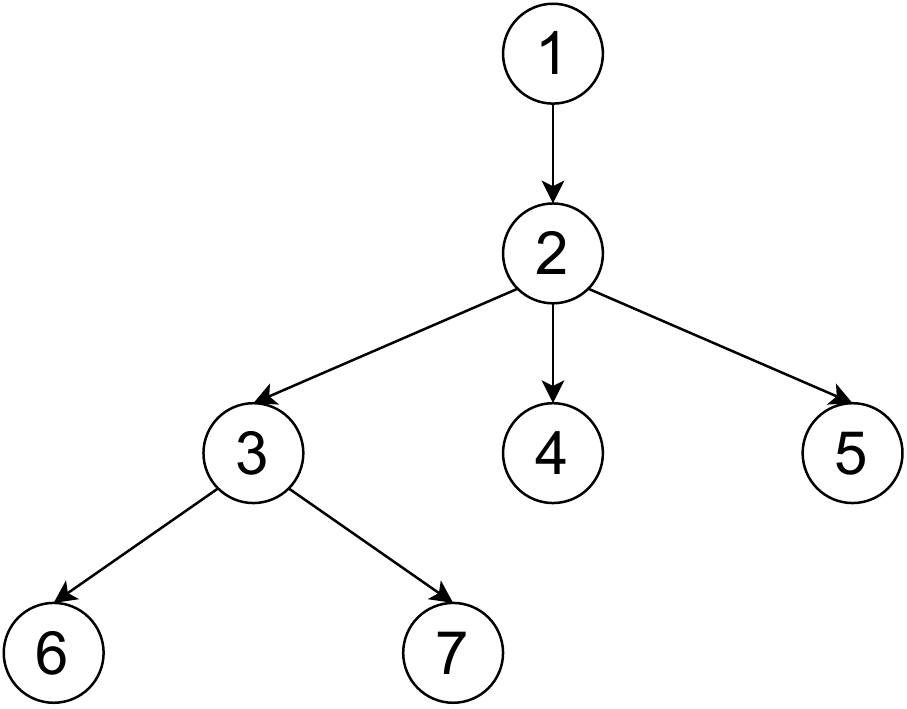}\label{fig:IGS-GraphExample}}\hfil
	\subfloat[Decision Tree]{\includegraphics[height=0.145\textheight]{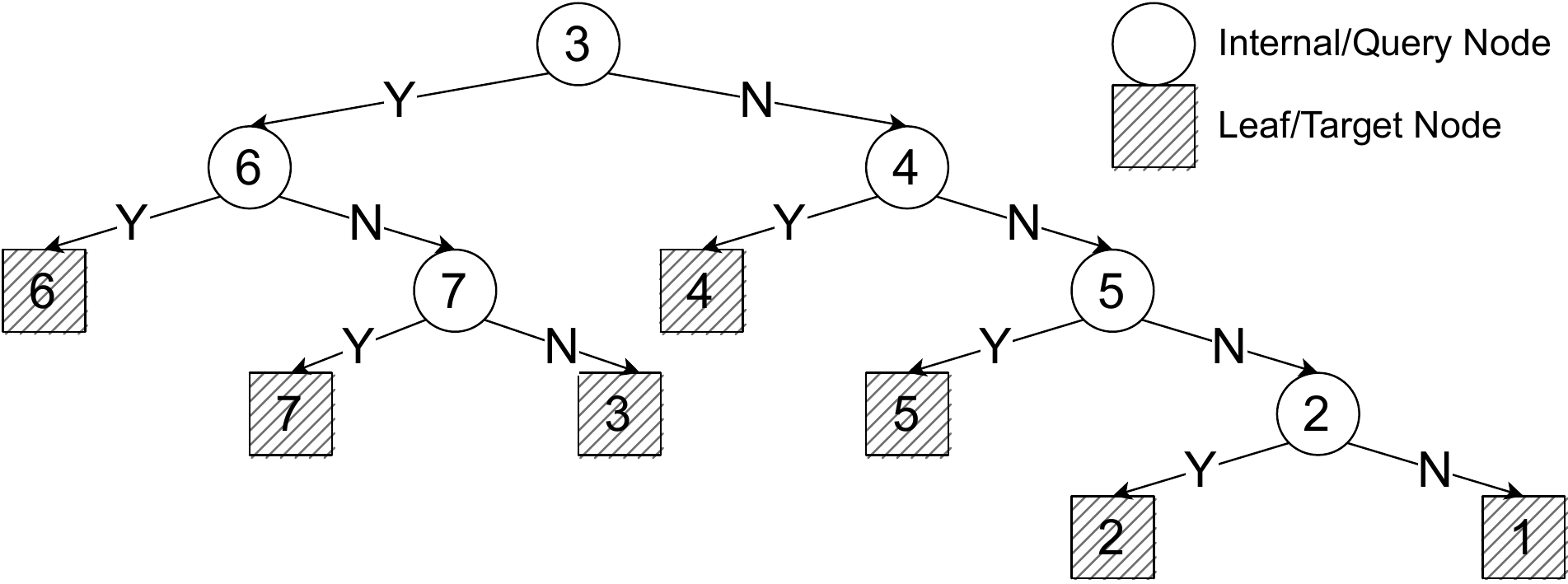}\label{fig:IGS-DTExample}}
	\caption{Decision tree for the query policy of interactive graph search.}\label{fig:IGS-DT}
\end{figure*}

\subsection{Greedy Policy}
Recall that a key step of AIGS is to select the query node $u$ which will split the candidate DAG $G$ into two DAGs $G_u$ and $G \setminus G_u$, where $G_u$ is the subgraph of $G$ rooted at $u$ with all its descendants. Then, the target node $z$ is either in $G_u$ or in $G\setminus G_u$. Hence, to minimize the expected cost, intuitively, a ``good'' strategy is to filter nodes with total weight as large as possible with respect to the probability distribution. That is, the query splits the candidate set of nodes into two parts with the most equal probability weights. This strategy can avoid asking a large number of questions to identify the nodes with high probability. In this paper, we utilize such a natural greedy policy to address the AIGS problem. Specifically, the greedy policy picks the node $u$, referred to as the \textit{middle point}, such that the difference between $p(G \setminus G_u)$ and $p(G_u)$ is minimized.
\begin{definition}[Middle Point]\label{def:middle_point}
	Given a DAG $G$ associated with a weight $p(v)$ for each node $v$, the middle point $u^\ast$ of $G$ is defined as
	\begin{equation*}
		u^\ast:=\argmin\limits_{u \in V}\ \abs{p(G_u)-p(G \setminus G_u)}=\argmin\limits_{u \in V}\ \abs{2p(G_u)-p(G)}.
	\end{equation*}
\end{definition}
Note that there may exist multiple middle points in a DAG. If such a case happens, the greedy policy just arbitrarily chooses one of them.

\subsection{Theoretical Guarantees}\label{sec:analysis:guarantees}
In what follows, we show that the greedy policy can achieve an approximation guarantee of $O(\log n)$ for AIGS, which matches the best achievable ratio. A key step of our analysis is to map AIGS to the binary decision tree problem~\cite{garey1972optimal,chakaravarthy2007decision}.

There has been extensive research~\cite{Kosaraju1999optimal,chakaravarthy2007decision} showing that the minimum (probability) weight of nodes will have a significant impact on the approximation ratio for the decision tree problem. Fortunately, a rounding technique~\cite{chakaravarthy2007decision} can be used tackle the negative impact of the minimum weight. In particular, for each node $u$, its weight will be rounded from $p(u)$ to $w(u)$ as follows,
\begin{equation}\label{equation:rounded-greedy}
	w(u)=\left\lceil\frac{n^2\cdot p(u)}{\max_{v \in V} p(v)}\right\rceil.
\end{equation}
Then, the greedy policy is used to construct the decision tree based on the rounded weights $w(\cdot)$, namely rounded greedy policy. \citet{chakaravarthy2007decision} proved that the rounded greedy policy achieves an approximation ratio of $2(1+3\ln n)$.\footnote{The original analysis~\cite{chakaravarthy2007decision} requires the probabilities to be rational numbers, whereas we assume the weights to be real numbers. However, this mismatch does not matter a lot, since the machine uses a limited number of bits to record a real number with sufficient precision.} This approximation result also applies to AIGS.

\begin{theorem}\label{thm:round-greedy}
The rounded greedy algorithm provides an approximation ratio of $2(1+3\ln n)$ for AIGS.
\end{theorem}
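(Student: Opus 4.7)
The plan is to reduce AIGS to the binary decision tree problem and then invoke the rounded-greedy analysis of~\cite{chakaravarthy2007decision}. By Lemma~\ref{the:IGS_to_searchinposet_appendix}, AIGS is equivalent to average-case search in a poset, and any adaptive query policy corresponds naturally to a binary decision tree: each internal node is labelled by the query $u\in V$ chosen at that step, the \emph{yes}-branch recurses on the candidate set $G_u$, the \emph{no}-branch on $G\setminus G_u$, and each leaf identifies a unique target. Since these two successors always partition the current candidate set, the expected cost of the policy equals the weighted external path length $\sum_{v\in V} p(v)\,d_\pi(v)$, where $d_\pi(v)$ is the depth of $v$'s leaf.

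With this reduction in place, the middle-point rule of Definition~\ref{def:middle_point} applied to the rounded weights $w(\cdot)$ of~\eqref{equation:rounded-greedy} is precisely the balanced-splitting heuristic studied in~\cite{chakaravarthy2007decision}. Their main theorem then delivers a decision tree whose expected depth under $w(\cdot)$ is within a factor $(1+3\ln n)$ of the $w$-optimum; the $n^2$ scaling in~\eqref{equation:rounded-greedy} guarantees that the total rounded mass $W=\sum_v w(v)$ satisfies $\ln W = O(\ln n)$, which is what keeps the logarithmic factor in their bound at $3\ln n$.

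The remaining task is to transfer the ratio from $w(\cdot)$ back to $p(\cdot)$. From $n^2 p(u)/p_{\max} \le w(u) \le n^2 p(u)/p_{\max} + 1$ with $p_{\max}:=\max_v p(v)$, one sees that $p(u)$ and $w(u)/W$ differ by only $O(1/n^2)$ per node, so the cost of any policy under the two weightings differs by a small additive term along paths of polynomial depth. Combining this with the trivial lower bound $\mathrm{OPT}_p \ge p_{\max}$ (any successful policy must query at least once on the most-probable target), the accumulated rounding slack absorbs into a single multiplicative factor of $2$, yielding the claimed $2(1+3\ln n)$ approximation.

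The main obstacle will be the rounded-to-original bridge: one has to verify that the additive $+1$ in $\lceil\cdot\rceil$, summed over all leaves at depths that can in principle be as large as $n$, still contributes at most a constant factor, and that both $\mathrm{OPT}_w/W$ and the greedy cost under $w(\cdot)/W$ stay within a factor $2$ of their $p$-counterparts. This is precisely where the scaling factor $n^2$ in~\eqref{equation:rounded-greedy} is chosen to be large enough to dominate the rounding error. Once this book-keeping is settled, the $O(\log n)$ factor is inherited directly from~\cite{chakaravarthy2007decision} and the constants in the theorem follow immediately.
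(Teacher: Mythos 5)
Your proposal follows essentially the same route as the paper: reduce AIGS to the average-case binary decision tree problem (each node of $G$ is simultaneously an object and a reachability attribute, so any adaptive query policy is exactly a binary decision tree whose weighted external path length equals the expected cost) and then invoke the $2(1+3\ln n)$ guarantee of \citet{chakaravarthy2007decision} for the rounded greedy policy. The only divergence is that your last two paragraphs attempt to re-derive the rounded-to-original transfer and attribute the factor $2$ to rounding slack; this bookkeeping is unnecessary for the paper's argument, since the cited theorem already asserts the $2(1+3\ln n)$ ratio for the rounded greedy with respect to the original weights, which is all that is used here.
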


To prove Theorem~\ref{thm:round-greedy}, we make the connection between our AIGS problem and the decision tree problem~\cite{chakaravarthy2007decision,garey1972optimal}. In particular, a target node of AIGS corresponds to a leaf node of decision tree while a query node of AIGS corresponds to an internal node of decision tree (i.e., whether the target node is reachable from the query node). Hence, we revisit the AIGS problem from the view of decision tree. In particular, as the answer is a boolean, i.e.,~\textit{yes} or \textit{no}, a binary decision tree is considered in practice, which is formally defined as follows.

\begin{definition}[Binary Decision Tree~\cite{chakaravarthy2007decision}]\label{def:dt}
The binary decision tree problem take as input a table having $N$ rows and $M$ columns. Each row is called an object and columns are binary attributes of these objects. The aim of this problem is to output a binary tree where each leaf is associated with an object and each internal node is associated with a test for one attribute. If an object passes a test, it goes to the left branch; otherwise, it goes to the right branch. Each object can be uniquely identified by a path from the root to its representing leaf. The goal of a decision tree problem is to find such a tree that minimizes the weighted sum of the depths of all the leaves, considering that each object (i.e.,~leaf) is associated with a weight. 
\end{definition}

\begin{lemma}
The AIGS problem can be reduced to the binary decision problem.
\end{lemma}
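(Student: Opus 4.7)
The plan is to turn any AIGS instance $(G=(V,E),p)$ into an equivalent binary decision tree instance so that feasibility and cost coincide. Take the row set to be $V$, so each potential target $v$ becomes an object with weight $p(v)$. Introduce one column per node $u\in V$, and set the entry at row $v$, column $u$ to $1$ iff $v\in G_u$, i.e., iff there is a directed path from $u$ to $v$ in $G$. With this encoding, performing the attribute test of column $u$ on the row indexed by the (unknown) target $z$ returns exactly the same boolean as the AIGS query $reach(u)$ would on target $z$.

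Next I would establish a correspondence between AIGS policies and binary decision trees over these attributes. Given a policy, the associated decision tree is built by labeling each internal node with the query node selected at that history, routing the \emph{yes}-answer to the "passes test" branch and the \emph{no}-answer to the "fails test" branch, and labeling each leaf by the unique surviving candidate. Conversely, any binary decision tree induces an AIGS policy by reading off the attribute at each internal node as the next query. The AIGS stopping rule (the candidate set has been narrowed to a single node) matches the requirement that every decision-tree leaf carry exactly one object. This is always realizable because in a DAG the ancestor sets $\{u\in V:u\text{ reaches }v\}$ are pairwise distinct across distinct $v$ (two nodes with the same ancestor set would have to reach each other, contradicting acyclicity), so the $n$ supplied columns are rich enough to isolate every row.

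Finally I would match the objective values. When the target is $v$, the number of queries the AIGS policy uses equals the depth of the leaf labeled $v$ in the corresponding decision tree; averaging over the distribution $p$ yields $\sum_{v\in V}p(v)\cdot\mathrm{depth}(v)$, which is precisely the weighted sum of leaf depths that the binary decision tree problem minimizes. Hence an optimal (respectively $\alpha$-approximate) decision tree on the constructed instance yields an optimal (respectively $\alpha$-approximate) AIGS policy, so the reduction is approximation-preserving and Theorem~\ref{thm:round-greedy} will follow by invoking the existing rounded-greedy guarantee of~\cite{chakaravarthy2007decision}. The main subtlety I anticipate is making explicit that no expressive power is lost in either direction: the attribute set is in bijection with the set of admissible query nodes, and antisymmetry of DAG reachability rules out indistinguishable objects, so the decision-tree model is neither stronger nor weaker than the AIGS model on the constructed instance.
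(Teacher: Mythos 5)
Your construction is exactly the paper's: rows are the $n$ nodes as objects with weights $p(v)$, columns are the $n$ reachability attributes, and the entry at row $v$, column $u$ is $1$ iff $v$ is reachable from $u$, so that a test on column $u$ coincides with $reach(u)$ and the expected query cost equals the weighted sum of leaf depths. Your additional observation that acyclicity forces distinct ancestor sets (so every leaf can be isolated) is a small but welcome piece of explicitness that the paper leaves implicit; otherwise the two arguments are the same.
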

\begin{proof}
To reduce the AIGS problem to the binary decision tree problem, we can consider each node in the input hierarchy as an object and the reachability relations as the attributes. Specifically, we can transfer the input hierarchy with $n$ nodes to an $n \times n$ table where the attribute of $i$-th row and $j$-th column equals to $1$ if node $i$ is reachable from node $j$, and equals to $0$ otherwise. Then, a query on node $j$ in AIGS is exactly a test for one attribute $j$ in the decision tree. As a result, minimizing the expected cost for AIGS is equivalent to minimizing the weighted sum of the depths of all the leaves for the binary decision tree problem. This completes the proof.
\end{proof}

Based on this observation, we construct a binary decision tree according to the search strategy of AIGS and calculate the cost based on the constructed decision tree.

\begin{definition}[Decision Tree for AIGS]
	We model the query policy of AIGS as a decision tree, where each internal node represents a question (i.e.,~the query node) and each leaf node represents a search result (i.e.,~the target node). Starting from the root of a decision tree, the algorithm asks questions following the below rules.
	\begin{itemize}
		\item If the answer is \textit{yes}, i.e.,~the target node is reachable from the query node, it goes to the left child.
		\item Otherwise, if the answer is \textit{no}, it goes to the right child.
	\end{itemize}
	The above process stops at a leaf node, which will be returned as the search result.
\end{definition}

According to the above definition, it is trivial to see that given any DAG $G$, the size of its decision tree $D$ is at most twice the size of $G$, since every node of $G$ appears as a leaf node of $D$ and meanwhile at most all nodes of $G$ serve as internal nodes of $D$. Furthermore, it is easy to observe that the number of questions asked for identifying a target node is equal to the depth of the corresponding leaf node in $D$, i.e.,~the length of its path from the root in $D$. Meanwhile, the weight of each leaf node $v$ is equal to the probability $p(v)$ of the node $v$ being the target node. Hence, we formally define the cost of a query policy leveraging the notion of decision tree.
\begin{definition}[Cost of Query Policy]\label{def:cost1}
	Given a query policy for the AIGS problem, let $D$ be the decision tree constructed according to the policy. Denote by $L(D)$ the set of all leaf nodes in $D$, and by $\ell(v)$ the depth of every leaf node $v\in L(D)$. Then, the cost of $D$, i.e.,~the expected cost of the query policy, is given by
	\begin{equation}\label{eq:cost1}
	cost(D)=\sum_{v \in L(D)}p(v)\ell(v).
	\end{equation}
\end{definition}

\begin{example}
	\figurename~\ref{fig:IGS-GraphExample} gives the graph representation of the hierarchy given in \figurename~\ref{fig:ExampleCategoryIdentification}. Suppose that the nodes have equal weights, i.e.,~$p(v)=1/7$ for every $v\in V$. \figurename~\ref{fig:IGS-DTExample} shows the decision tree of a (greedy) policy. Assume that node $5$ is the target node. Then, the algorithm will sequentially query on nodes $3$, $4$ and $5$ with answers of \textit{no}, \textit{no} and \textit{yes}, respectively, so that the target node is figured out with no ambiguity. Moreover, the expected cost of this policy is $\frac{1}{7}\times(2\times 2+ 3\times 3 + 2\times 4)=3$.\done
\end{example}

Now, we are ready to prove Theorem~\ref{thm:round-greedy}.
\begin{proof}[Proof of Theorem~\ref{thm:round-greedy}]
    Our above analysis shows that AIGS can be transferred to a special case of average-case binary decision tree. In addition, \citet{chakaravarthy2007decision} proved that the rounded greedy policy achieves an approximation ratio of $2(1+3\ln n)$ for average-case binary decision tree. Putting it together completes the proof.
\end{proof}

Next, we consider two special cases of AIGS: (i) the input hierarchy is a tree, and (ii) the probability of every node being the target is identical, i.e.,~$p(v)=1/n$ for every $v\in V$. We show that the approximation ratios achieved by the greedy policy under the two scenarios are much better than $O(\log n)$ leveraging the theoretical developments of poset \cite{cicalese2014improved} and decision tree \cite{li2020tight}.

\begin{theorem}\label{the:IGS_tree_approximation_ratio}
    If the input hierarchy is a tree, the greedy policy provides an approximation ratio of $\frac{1+\sqrt{5}}{2}$ for AIGS.
\end{theorem}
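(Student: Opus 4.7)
The plan is to exploit the equivalence between IGS and poset search established in Lemma~\ref{the:IGS_to_searchinposet_appendix} and then inherit the golden-ratio approximation proved by Cicalese et al.\ for average-case search on weighted tree posets \cite{cicalese2014improved}. Concretely, I would regard the input tree hierarchy $G$ together with the node weights $p(\cdot)$ as a weighted tree poset $(V,\le_R)$, where the ancestor-descendant relation in $G$ coincides with the partial order. A query on $u\in V$ in AIGS returns \emph{yes} precisely when the target $z$ lies in $G_u$, which matches the poset test ``is $z\le_R u$?'' in the tree poset search formulation.

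Having fixed this correspondence, the next step is to verify that the middle-point rule of Definition~\ref{def:middle_point} is literally the same greedy rule analyzed in \cite{cicalese2014improved}: both, at each step, pick the query $u$ that most evenly bisects the total weight of the current candidate set, i.e., minimizes $|p(G_u)-p(G\setminus G_u)|$. One also has to confirm that the recursive structure is preserved across a query. After a \emph{yes}-answer the remaining candidate set is the subtree $G_u$, which is again a tree poset with inherited weights; after a \emph{no}-answer it is $G\setminus G_u$, a rooted tree obtained by pruning a subtree, again with inherited weights. Hence every subproblem encountered by the greedy policy is itself a weighted tree poset search instance on which the Cicalese et al.\ analysis can be applied recursively.

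Given these reductions, the proof simply invokes the main result of \cite{cicalese2014improved}: the most-balanced-split greedy produces a strategy whose expected cost is at most $\varphi\cdot \mathrm{OPT}$, with $\varphi=(1+\sqrt{5})/2$, on any weighted tree poset. Translating back, the decision tree $D$ built by our greedy policy satisfies $cost(D)\le \varphi\cdot cost(D^\ast)$, where $D^\ast$ is an optimal decision tree for the AIGS instance and the expected cost is measured by Equation~\eqref{eq:cost1}. This yields the claimed $(1+\sqrt{5})/2$ approximation ratio.

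The principal obstacle is making the equivalence between AIGS on a tree and tree poset search tight enough at the algorithmic level. Two points require care: first, that the \emph{no}-branch in AIGS, which retains exactly the elements not $\le_R u$, is treated identically in \cite{cicalese2014improved}; second, that the ``admissible query set'' in their framework allows arbitrary elements of the current candidate poset, matching our freedom to pick any remaining node in $G$. Once these matching conditions are spelled out, no further probabilistic or combinatorial argument is needed, and the $\varphi$-approximation bound transfers directly to the tree case of AIGS.
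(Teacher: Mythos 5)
Your proposal matches the paper's own proof: both reduce AIGS on a tree to weighted tree-poset search via Lemma~\ref{the:IGS_to_searchinposet_appendix} and then invoke the golden-ratio guarantee of \citet{cicalese2014improved} for the most-balanced-split greedy. The extra care you take in checking that the middle-point rule and the \emph{yes}/\emph{no} branching coincide with the setup of \cite{cicalese2014improved} is exactly the (implicit) content of the paper's two-line argument, so no gap remains.
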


\begin{proof}
    Let $\widetilde{D}$ be the decision tree constructed by greedy policy and $D^\ast$ be the optimal decision tree, with respect to the cost function defined in \eqref{eq:cost1}. For such a tree-like poset, \citet{cicalese2014improved} proved that the cost of $\widetilde{D}$ satisfies
    \begin{equation}
    cost(\widetilde{D}) \le \frac{1+\sqrt{5}}{2}\cdot cost(D^\ast).
    \end{equation}
    Furthermore, according to Lemma~\ref{the:IGS_to_searchinposet_appendix}, we know that $cost(\widetilde{D})$ is exactly the expected number of questions asked by the greedy policy for AIGS. Putting it together completes the proof.
\end{proof}

\begin{theorem}\label{the:IGS_equal_prob_approximation_ratio}
	If every node has an equal probability to be the target node, i.e., $p(v)=\frac{1}{n}$ for every $v\in V$, the greedy policy provides an approximation ratio of $O\left(\frac{\log n}{\log \log n}\right)$ for AIGS.
\end{theorem}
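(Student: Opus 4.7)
The plan is to leverage the reduction of AIGS to the average-case binary decision tree problem established earlier in this section, and then invoke the tight analysis of the greedy algorithm on uniform-weight decision trees given in \cite{li2020tight}. Under the hypothesis $p(v) = 1/n$ for every $v \in V$, the objective $cost(D) = \sum_{v \in L(D)} p(v)\ell(v)$ reduces (up to the factor $1/n$) to the sum of leaf depths, which is precisely the external path length that the decision-tree literature studies under uniform weights.

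First, I would recall that every policy for AIGS on the DAG $G$ induces a decision tree $D$ whose leaves are the nodes of $G$ and whose internal nodes correspond to reachability queries $reach(u)$; conversely, AIGS is a special case of the binary decision tree problem (as argued in the proof of Theorem~\ref{thm:round-greedy}). Under uniform probabilities, both $cost(\widetilde{D})$ for the greedy decision tree $\widetilde{D}$ and $cost(D^\ast)$ for the optimum $D^\ast$ are proportional to the total leaf depth.

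Second, I would verify that the middle-point rule of our greedy policy coincides with the classical balanced-split greedy rule analyzed in \cite{li2020tight} once the weights are equal. Indeed, with $p(v) = 1/n$,
\begin{equation*}
    u^\ast = \argmin_{u \in V} |p(G_u) - p(G \setminus G_u)| = \argmin_{u \in V} \bigl||V(G_u)| - |V(G)|/2\bigr|,
\end{equation*}
so our policy picks the query whose positive answer retains the most nearly half of the remaining candidates. This is exactly the greedy criterion studied for uniform-weight decision trees.

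Third, I would invoke the result of \cite{li2020tight}, which shows that the greedy algorithm attains an $O(\log n / \log \log n)$ approximation for the uniform-weight decision tree problem. Combining this with the reduction yields
\begin{equation*}
    cost(\widetilde{D}) \le O\!\left(\frac{\log n}{\log \log n}\right) \cdot cost(D^\ast),
\end{equation*}
which by Lemma~\ref{the:IGS_to_searchinposet_appendix} is precisely the approximation ratio for AIGS. The main subtlety will be ensuring that the hypotheses of \cite{li2020tight} are met by the instances arising from our reduction — in particular, that the attribute structure induced by reachability (each query partitions the candidate set into $G_u$ and $G \setminus G_u$) falls within the model for which the $O(\log n / \log \log n)$ bound is proved. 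Once this compatibility is confirmed, no further analysis is needed and the theorem follows immediately.
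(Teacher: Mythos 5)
Your proposal follows essentially the same route as the paper: reduce AIGS to the average-case binary decision tree problem and invoke the uniform-weight greedy analysis of \cite{li2020tight}. The one substantive point you gloss over is that the paper quotes that result in the form $cost(\widetilde{D}) \le \frac{6\log n}{\log cost(D^\ast)}\cdot cost(D^\ast)$ rather than as a ready-made $O\left(\frac{\log n}{\log\log n}\right)$ corollary, so it must additionally observe that any binary tree with $n$ leaves has total leaf depth at least $n\log_2 n$ (minimized by the complete binary tree), hence $cost(D^\ast)\ge\log_2 n$ and the ratio becomes $O\left(\frac{\log n}{\log\log n}\right)$; if you cite the theorem in its stated form, you need to include this lower bound on the optimum explicitly.
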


\begin{proof}
    \citet{li2020tight} showed that for any instance of the decision tree problem on $n$ objects with equal probability to be the target, the greedy decision tree $\widetilde{D}$ and the optimal decision tree $D^\ast$ satisfy
    \begin{equation*}
    cost(\widetilde{D}) \le \frac{6 \log n}{\log cost(D^\ast)}\cdot cost(D^\ast).
    \end{equation*}
    Moreover, for any binary tree with $n$ leaf nodes, it is trivial to verify that the total depth of all leaf nodes is minimized on the complete binary tree, which is at least $n\log_2 n$. This implies the optimal solution has a cost of at least $\log_2 n$. As a result, the greedy policy returns an $O\left(\frac{\log n}{\log \log n}\right)$-approximate solution for AIGS when every node has an equal probability to be the target node. Hence, the theorem is proved.
\end{proof}

\subsection{Extension to Heterogeneous Cost}\label{section:Analysis:CAIGS}
In the above discussion, we consider that each query charges a fixed price. In some scenarios, different questions may have different difficulties, which asks for heterogeneous payments for different questions, e.g.,~$\$0.5$ for an easy question and $\$1.5$ for a hard question. We show that the greedy policy with a slight modification on the definition of \textit{middle point}, taking the cost weight of each query into consideration, can address the problem with strong theoretical guarantees.

Consider the cost-sensitive AIGS (CAIGS) problem that the query on node $v$ charges a price of $c(v)$. It is easy to observe that the cost for identifying a target node $z$ is equal to the total cost of the query nodes on the path from the root to the leaf node $z$ in $D$. Similar to Definition~\ref{def:cost1}, the cost of a query policy for CAIGS can be defined as follows.
\begin{definition}[Cost of Query Policy for CAIGS]\label{def:cost-sensitive-cost}
	Given a query policy for the CAIGS problem, let $D$ be the decision tree constructed according to the policy. Denote by $L(D)$ the set of all leaf nodes in $D$, and by $\hat{\ell}(v)$ the total cost of the query nodes on the path from the root to the leaf node $v$ in $D$. Then, the cost of $D$, i.e.,~the expected cost of the query policy, is given by
	\begin{equation}\label{eq:cost-sensitive-cost}
		cost(D)=\sum_{v \in L(D)}p(v)\hat{\ell}(v).
	\end{equation}
\end{definition}


To cope with the heterogeneous cost of each query node, we attempt to select the query node such that (i) it splits the category graph as evenly as possible with respect to the probability weights, and meanwhile (ii) its query cost is as cheap as possible. For goal (i), we try to find a node $u$ that maximizes $p(G_u)p(G \setminus G_u)$. That is, since $p(G_u)+p(G \setminus G_u)=p(G)$ is a constant, maximizing $p(G_u)p(G \setminus G_u)$ is to minimize the difference between $p(G_u)$ and $p(G \setminus G_u)$, i.e., $\min \abs{p(G_u)-p(G \setminus G_u)}$. For goal (ii), we directly minimize $c(u)$. Finally, to optimize both goals simultaneously, we select the node $u$ that maximizes $\frac{{p(G_u)p(G \setminus G_u)}}{c(u)}$. We define such a node as the cost-sensitive middle point.

\begin{definition}[Cost-Sensitive Middle Point]\label{def:cost_sensitive_middle_point}
	Given a DAG $G$ associated with a probability weight $p(v)$ for $v$ being the target node and a query cost $c(v)$ on node $v$, the cost-sensitive middle point $u^\ast$ of $G$ is defined as
	\begin{equation*}
		u^\ast:=\argmax\limits_{u \in V}\ \frac{{p(G_u)p(G \setminus G_u)}}{c(u)}.
	\end{equation*}
\end{definition}

\begin{figure}[!bpt]
	\centering
	\subfloat[Hierarchy]{\includegraphics[height=0.115\textheight]{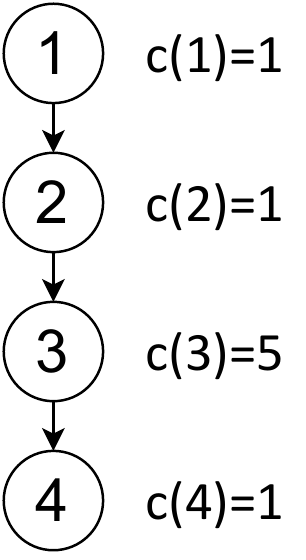}\label{fig:CAIGS-Example}}\hfil
	\subfloat[Simple Greedy]{\includegraphics[height=0.115\textheight]{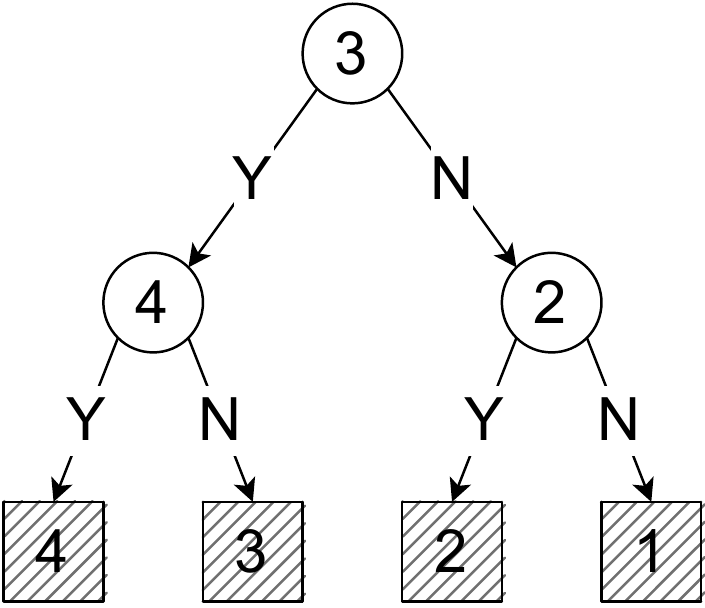}\label{fig:CAIGS-DT1}}\hfil
	\subfloat[Cost-sensitive Greedy]{\includegraphics[height=0.115\textheight]{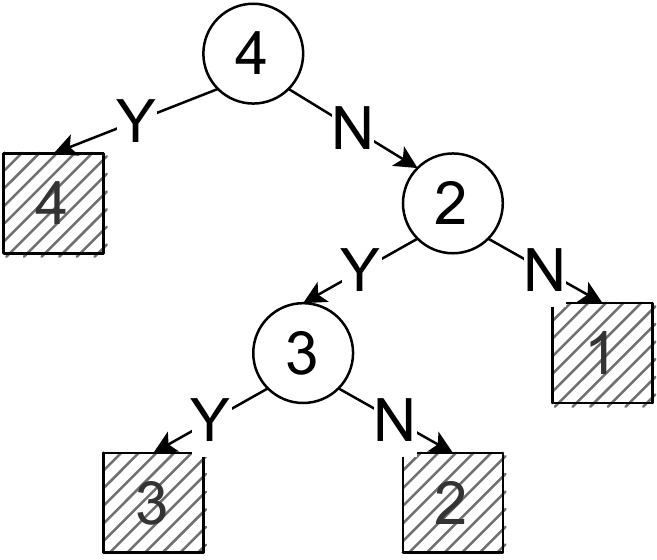}\label{fig:CAIGS-DT2}}
	\caption{Cost-sensitive greedy for CAIGS.}
	\label{fig:CAIGS}
\end{figure}

Note that when each query charges a unit price, the node $u^\ast$ maximizing $p(G_u)p(G \setminus G_u)$ also minimizing $\abs{p(G_u)-p(G \setminus G_u)}$, which generalizes the middle point defined in Definition~\ref{def:middle_point} for the case of nodes with homogeneous costs. In the following, we show that the cost-sensitive rounded greedy algorithm, which picks the cost-sensitive middle point as the query node in each round with respect to the rounded probability weights by equation (\ref{equation:rounded-greedy}), achieves an approximation ratio of $2(1+3\ln n)$ for CAIGS.

\begin{theorem}\label{thm:cost-sensitive-round-greedy}
The cost-sensitive rounded greedy algorithm provides an approximation ratio of $2(1+3\ln n)$ for CAIGS.
\end{theorem}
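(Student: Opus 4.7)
The plan is to mimic the argument used for Theorem~\ref{thm:round-greedy}, with the heterogeneous query costs playing the role of the unit test cost in the binary decision tree problem. First I would generalise the reduction of AIGS to decision trees: every node $v$ of the hierarchy simultaneously plays the role of an object (a leaf with weight $p(v)$) and the role of a binary test (the reachability query on $v$). In the cost-sensitive setting, I would associate the test for node $v$ with cost $c(v)$. Under this correspondence, the quantity $cost(D)$ defined in~\eqref{eq:cost-sensitive-cost} coincides exactly with the standard weighted cost-depth objective of the cost-sensitive binary decision tree problem, since $\hat{\ell}(v)$ is the sum of the costs of the tests labelling the root-to-leaf path of $v$.

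Second, I would show that the cost-sensitive middle-point rule of Definition~\ref{def:cost_sensitive_middle_point} is precisely the greedy splitting criterion analysed in the cost-sensitive extension of~\cite{chakaravarthy2007decision}: at every internal node, choose the test $u$ maximising the product of the two resulting class probabilities divided by the test cost, which is $p(G_u)p(G\setminus G_u)/c(u)$. Applying the rounding in~\eqref{equation:rounded-greedy} to the probabilities and then running the cost-sensitive greedy is therefore the same procedure as the cost-sensitive rounded greedy of~\cite{chakaravarthy2007decision} executed on the decision tree instance produced by the reduction. Invoking their approximation guarantee of $2(1+3\ln n)$ for the cost-sensitive case, and chaining it through the reduction, yields the theorem.

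The main obstacle will be the first step: the reduction must faithfully preserve both the probability weights and the per-test costs, and I must argue that the identification constraint of CAIGS (each root-to-leaf path determines a unique target node from the sequence of \emph{yes}/\emph{no} answers) is respected, so that the optimum of the constructed decision tree instance is no larger than the optimum of CAIGS on $G$. I would also have to check that the cost-sensitive version of Chakaravarthy et al.'s splitting rule does indeed correspond to the $p(G_u)p(G\setminus G_u)/c(u)$ criterion rather than to a variant tailored to their specific reduction; once this structural correspondence is pinned down, the $2(1+3\ln n)$ bound transfers verbatim.
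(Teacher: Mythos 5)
Your overall architecture --- reduce CAIGS to a cost-sensitive binary decision tree instance that preserves both the leaf weights $p(v)$ and the per-test costs $c(v)$, then run the rounded greedy with the splitting criterion of Definition~\ref{def:cost_sensitive_middle_point} --- is the same route the paper takes, and the reduction step is unproblematic. The gap is in the step where you ``invoke their approximation guarantee of $2(1+3\ln n)$ for the cost-sensitive case'' of \citet{chakaravarthy2007decision}: that paper's $2(1+3\ln n)$ bound is for \emph{unit-cost} tests with arbitrary (rounded) object weights, and it contains no cost-sensitive analysis to invoke. The hedge you add at the end --- checking that their splitting rule ``corresponds to'' the $p(G_u)p(G\setminus G_u)/c(u)$ criterion --- does not repair this, because the missing ingredient is not the form of the rule but the existence of the cost-sensitive guarantee itself.

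What the paper actually does is stitch together two separate results: \citet{adler2008approximating} show that the cost-sensitive greedy (maximizing the product of the two branch probabilities divided by the test cost) retains the same approximation ratio as the unit-cost greedy \emph{when all objects have equal probability} $p(v)=1/n$, and the rounding step \eqref{equation:rounded-greedy} of \citet{chakaravarthy2007decision} is then used to handle arbitrary weights $p(v)$ without degrading the ratio beyond $2(1+3\ln n)$. The combination --- heterogeneous test costs from one reference, heterogeneous object weights from the other --- is the actual content of the proof, and it is the piece your proposal is missing. To complete your argument you would either need to carry out that combination explicitly (arguing that the rounding reduces the weighted case to one where the Adler--Heeringa analysis applies) or locate a single reference that treats both heterogeneities simultaneously.
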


\begin{proof}[Proof (Sketch)] \citet{adler2008approximating} showed that when the probability weight $p(v)$ is identical for every $v$, i.e., $p(v)=1/n$, the cost-sensitive greedy policy considering heterogeneous cost for each test can obtain the same approximation ratio as homogeneous cost on the binary decision tree problem. Combing with the rounding technique~\cite{chakaravarthy2007decision} tailored to the weighted probability $p(v)$ yields that the cost-sensitive rounded greedy algorithm has an approximation ratio of $2(1+3\ln n)$ for the CAIGS problem.
\end{proof}

The following example demonstrates the effectiveness of the cost-sensitive rounded greedy algorithm.

\begin{example}
As an example, consider a simple category hierarchy given in \figurename~\ref{fig:CAIGS-Example}. Without considering the query cost of each node (i.e., homogeneous costs), as shown in \figurename~\ref{fig:CAIGS-DT1}, the greedy strategy first selects node 3; then chooses node 4 if a yes-answer is returned to further distinguish nodes 3 and 4, and chooses node 2 otherwise to distinguish nodes 1 and 2. Suppose that $c(1)=c(2)=c(4)=1$ and $c(3)=5$. Then, the expected cost of such a greedy strategy is $5+1\times 0.5+1\times 0.5=6$. On the other hand, as shown in \figurename~\ref{fig:CAIGS-DT2}, the cost-sensitive greedy algorithm will first select node 4 rather than node 3, since $\frac{p(G_4) p(G \setminus G_4)}{c(4)}=\frac{0.25\times 0.75}{1}=0.1875$ while $\frac{p(G_3) p(G \setminus G_3)}{c(3)}=\frac{0.5\times 0.5}{5}=0.05$. If a no-answer is returned (no further action is required if a yes-answer is returned), it selects node 2. Only if a yes-answer is further returned, it chooses node 3. Thus, the expected cost of the cost-sensitive greedy algorithm is $1+1\times 0.75+5\times 0.5=4.25$, which is significantly smaller than $6$ by the simple greedy algorithm without considering the query cost of each node. \done
\end{example}

\begin{algorithm}[!tbp]
	\setlength{\hsize}{0.94\linewidth}
	\caption{$\greedyN(G)$}
	\label{alg:BIGS_naive}
	\KwIn{an input DAG $G$}
	\KwOut{the target node}
	$sum\_prob \gets 1$\;
	\While{$\abs{G}>1$}{
		$min\_prob \gets +\infty$\;\label{algline:q_start}
		\ForEach{node $v \in G$}{
			$reach\_prob \gets \getp(G,v)$\;\label{algline:q_prob}
			\If{$\abs{2\cdot reach\_prob-{sum\_prob}} < min\_prob$}{
				$q \gets v$\;
				$q\_prob \gets reach\_prob$\;
				$min\_prob\gets \abs{2\cdot reach\_prob-{sum\_prob}}$\;\label{algline:q_end}
			}
		}
		\uIf{$reach(q)=yes$}{
			$G\gets G_q$\;\label{algline:update-Gq}
			$sum\_prob \gets q\_prob$\;
		}
		\Else{
			$G\gets G\setminus G_q$\;\label{algline:update-not-Gq}
			$sum\_prob \gets sum\_prob-q\_prob$\;
		}
	}
	\Return{the only node in $G$\;}
\end{algorithm}

\subsection{Additional Discussions}\label{subsec:discussions}
In the above analysis, we prove the approximation guarantees in the most general case. However, there may be special cases where the theoretical and experimental results are significantly better. For example, \citet{nowak2008generalized} showed that the greedy algorithm can achieve a better guarantee given that the hierarchy has some geometric conditions. However, such conditions are hard to meet in practice. Deriving a succinct condition is certainly an interesting topic.

In this paper, we only consider a purely sequential query policy. Apparently, queries may be asked in a batch to reduce interactions. Designing effective batched algorithms is an important extension and is rather challenging. For AIGS on a tree, we can ask a batch of $k$ questions simultaneously leveraging the $k$-partition scheme~\cite{kundu1977linear} to ensure provable guarantees. However, for AIGS on a general DAG, it remains an open problem for devising effective algorithms in a batched version with bounded guarantees.

Furthermore, our algorithms are designed for regular crowds. Some companies may employ in-house experts in practice. In such an expert environment, we can ask some more complicated queries which are difficult to the regular crowd but tractable to their in-house experts. There is a trade-off between the difficulty and the number of queries. Constructing a flexible strategy that properly balances the trade-off is an interesting problem, and we leave it as the future work. 

Note also that the AIGS problem is motivated by an application of image categorization, but our algorithms can be also applied to other content formats, e.g., textual tasks. Some existing experiment results~\cite{jain2017understanding} revealed that textual understanding often takes more time than visual. Therefore, the cost is likely to be higher if directly applying our approach to textual tasks.
\section{Instantiations of Greedy Policy}\label{sec:algorithm}
A key challenge for instantiating the greedy policy is to find the \textit{middle point} of the candidate hierarchy $G$. In this section, we first present a naive instantiation by independently calculating the total weight of all reachable nodes for every candidate node $v\in G$. To make use of the graph structure of the candidate hierarchy, we then design efficient algorithms to instantiate the greedy policy.

\begin{algorithm}[!tbp]
	\caption{$\getp(G,v)$}
	\label{alg:get_subgraph_weight}
	\KwIn{a DAG $G$ and a node $v$}
	\KwOut{the total probability of all $v$'s reachable nodes in $G$}
	$prob \gets 0$\;
	initialize a queue $S$ and insert the node $v$ into $S$\;
	mark every node in $G$ as \textit{unvisited}\;
	\While{$S$ is not empty}{
		push out a node from $S$ as $u$\;
		\ForEach{child $v$ of $u$}{
			\If{$v$ is not visited}{
				$prob \gets prob+p(v)$\;
				insert $v$ into $S$\;
				mark $v$ as \textit{visited}\;
			}
		}
	}
	\Return{$prob$\;}
\end{algorithm}

\subsection{A Naive Instantiation}
Algorithm~\ref{alg:BIGS_naive} gives the pseudo-code for a naive implementation of the greedy policy, referred to as \greedyN. It simply enumerates all candidate nodes and computes the total probability of all nodes reachable from every node independently in each round of query (Line~\ref{algline:q_start}--\ref{algline:q_end}). In particular, the total probability of the subgraph rooted at $v$ is calculated via a subroutine \getp (Line~\ref{algline:q_prob}) given by Algorithm~\ref{alg:get_subgraph_weight}. Given a graph $G$ and a query node $v$, the \getp algorithm performs a breadth first search (BFS) that starts from $v$. However, this naive instantiation can introduce heavy computational overhead.

\spara{Time Complexity} During the search process, as each round eliminates at least one candidate node, there will be at most $n$ rounds of finding the middle point. In each round, it will enumerate at most $n$ nodes in the candidate set and take $O(m)$ time for computing the total probability of the subgraph rooted at each candidate node, where $m$ is the number of edges in $G$. After finding the middle point, the algorithm needs to update the graph by performing a BFS according to the answer with $O(m)$ time. Hence, the total time complexity of \greedyN is $O(n^2m)$.

\spara{Discussion} The aforementioned instantiation of the greedy policy is straightforward and intuitive, but it is far from optimized in terms of its efficiency. In fact, such a naive implementation may not handle large-scale hierarchies due to its relatively high time complexity. To tackle the efficiency issue, there are two key challenges to be solved\textemdash (i) how to find the middle point efficiently and (ii) how to update the graph efficiently after getting the query result.

\SetKw{And}{and}
\begin{algorithm}[!tbp]
	\caption{$\greedyT(T)$}
	\label{alg:BIGS_T}
	\KwIn{a tree $T$}
	\KwOut{the target node}
	$r\gets$ the root of $T$\;
	initialize $\tilde{p}(v)\gets p(T_v)$ and $size(v)\gets \abs{T_v}$ for each $v\in T$ by performing $\getpDFS(T,r)$ (Algorithm~\ref{alg:setweight})\;\label{algline:precompute}
	\While{$size(r)>1$}{
		$v \gets r$\;\label{alg:BIGS_T:find_middle_1}
		\While{$2\tilde{p}(v)>\tilde{p}(r)$ \And $v$ is not a leaf node}{
			$u\gets v$\;
			$v\gets$ the child of $u$ with the largest $\tilde{p}(v)$\; 
			\label{alg:BIGS_T:find_hc}
		}
		\lIf{$\abs{2\tilde{p}(u)-\tilde{p}(r)} \le \abs{2\tilde{p}(v)-\tilde{p}(r)}$}{$q \gets u$}
		\lElse{$q \gets v$\label{alg:BIGS_T:find_middle_2}}
		\lIf{$reach(q)=yes$\label{alg:BIGS_T:update}}{$r \gets q$\label{alg:BIGS_T:change_root}}
		\Else{
			\ForEach{node $v $ on the path from $r$ to $q$\label{alg:BIGS_T:update_weights}}{
				$\tilde{p}(v) \gets \tilde{p}(v) - \tilde{p}(q)$\;
				$size(v) \gets size(v) - size(q)$\;\label{alg:BIGS_T:update_weights_end}
			}
		}
	}
	\Return{node $r$}\;
\end{algorithm}

\begin{algorithm}[!tbp]
	\caption{$\getpDFS(T,u)$}\label{alg:setweight}
	\KwIn{a tree $T$ and a node $u$}
	\KwOut{set the probability $\tilde{p}(u)=p(T_u)$ of subtree $T_u$ rooted at $u$ and its size $size(u)=\abs{T_u}$}
	initialize $\tilde{p}(u)\gets p(u)$ and $size(u)\gets 1$\;
	\If{node $u$ is not a leaf}{
		\ForEach{child $v $ of $u$}{
			$\getpDFS(T,v)$\;
			update $\tilde{p}(u)\gets \tilde{p}(u)+\tilde{p}(v)$\;
			update $size(u)\gets size(u)+size(v)$\;
		}
	}
\end{algorithm}

\subsection{Efficient Instantiation on Tree}
We first look at a simple scenario when the input hierarchy is a tree and denote such a hierarchy as $T$ (instead of $G$). In the \greedyN algorithm, it may be quite time-consuming by performing BFS to find the middle point. We address this challenge utilizing the notion of \textit{weighted heavy path} which extends the concept of heavy-path~\cite{sleator1983data} taking into account the probability of each node. In the following, we formally define the weighted heavy path.

\begin{definition}[Weighted Heavy Path]
Given an internal node $u$ of $T$, let $v$ be a child of $u$ with the largest subtree weight\footnote{The weight of a subtree is the total probability of nodes therein.} (with ties broken arbitrarily). Then, the edge between $u$ and $v$ is said to be \textit{heavy}, and the other out-going edges of $u$ are said to be \textit{light}. A weighted heavy path is a maximal path by concatenating heavy edges, i.e., the path cannot be extended with another heavy edge.
\end{definition}

According to the above definition, it is easy to see that for each node $u$ in $T$, there is at most one in-coming/out-going edge of $u$ is heavy. This implies that every node appears in one and exactly one weighted heavy path. We denote by $H(T,u)$ the weighted heavy path of $T$ that contains the node $u$. We show that $H(T,r)$ contains a middle point of $T$, where $r$ is the root of $T$.

\begin{theorem}\label{thm:Middle_Point_on_Heavy_Path}
Given a tree $T$, let $H(T,r)$ be the weighted heavy path containing the root $r$ of $T$, with respect to the node probability. Then, the node $u\in H(T,r)$ that minimizes $\abs{2\cdot p(T_u)-p(T)}$ is the middle point of $T$, where $T_u$ denotes the subtree of $T$ rooted at $u$.
\end{theorem}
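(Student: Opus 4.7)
The plan is to prove the contrapositive: every node $v^\ast \notin H(T,r)$ is weakly dominated by a specific node on $H(T,r)$ under the function $f(u) := |2p(T_u) - p(T)|$. Once this domination is established node by node, a minimizer of $f$ restricted to $H(T,r)$ automatically attains the global minimum of $f$, which is exactly the middle-point condition.

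First I would locate a canonical replacement on the heavy path. Walk from $v^\ast$ up toward the root until reaching the first ancestor that lies on $H(T,r)$; call it $u_i$. Let $u_{i+1}$ be the heavy child of $u_i$ on $H(T,r)$, and let $v'$ be the child of $u_i$ on the downward path to $v^\ast$. Because $v^\ast \notin H(T,r)$ we have $v' \ne u_{i+1}$, so $u_i$ has at least two distinct children whose subtrees are disjoint subsets of $T_{u_i}$ (this also guarantees that $u_{i+1}$ exists, i.e., $u_i$ is not a leaf). The heavy-edge rule then supplies two elementary inequalities that drive everything: (a) $p(T_{v^\ast}) \le p(T_{v'}) \le p(T_{u_{i+1}})$, using that $u_{i+1}$ maximizes the child-subtree weight at $u_i$, and (b) $p(T_{v'}) + p(T_{u_{i+1}}) \le p(T_{u_i}) \le p(T)$, using disjointness of subtrees inside $T_{u_i}$.

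The key observation I expect to squeeze out of (a) and (b) is $p(T_{v^\ast}) \le p(T)/2$: combining them gives $2p(T_{v^\ast}) \le p(T_{v'}) + p(T_{u_{i+1}}) \le p(T)$. This is what collapses the absolute value, since for every off-heavy-path node the expression opens as $f(v^\ast) = p(T) - 2p(T_{v^\ast})$. The argument then concludes by a short case split on the sign of $2p(T_{u_{i+1}}) - p(T)$. If $p(T_{u_{i+1}}) \le p(T)/2$, then (a) alone gives $f(u_{i+1}) = p(T) - 2p(T_{u_{i+1}}) \le p(T) - 2p(T_{v^\ast}) = f(v^\ast)$. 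Otherwise (a) and (b) rearrange to $p(T_{u_{i+1}}) + p(T_{v^\ast}) \le p(T_{u_{i+1}}) + p(T_{v'}) \le p(T)$, which is exactly $f(u_{i+1}) = 2p(T_{u_{i+1}}) - p(T) \le p(T) - 2p(T_{v^\ast}) = f(v^\ast)$. The main obstacle is really just establishing $p(T_{v^\ast}) \le p(T)/2$ for off-heavy-path $v^\ast$; without it one would have to juggle four sign cases, and with it the replacement $v^\ast \mapsto u_{i+1}$ is a one-line algebraic check, which is precisely the structural fact \greedyT exploits to restrict the search for the middle point to $H(T,r)$.
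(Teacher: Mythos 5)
Your proposal is correct and follows essentially the same route as the paper: both arguments rest on the heavy-child maximality $p(T_{v'})\le p(T_{u_{i+1}})$ together with the disjointness bound $p(T_{v'})+p(T_{u_{i+1}})\le p(T)$, and both resolve the absolute value for the heavy child by the same two-case split on the sign of $2p(T_{u_{i+1}})-p(T)$. The only cosmetic difference is organizational\textemdash you dominate each off-path node directly via its deepest heavy-path ancestor, while the paper phrases the same domination as an iterative pruning from the root\textemdash so no further comment is needed.
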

\begin{proof}
	Consider any internal node $u\in T$, let $v$ be the child of $u$ such that the edge $(u,v)$ is heavy and $x$ be any other child of $u$. Clearly, it holds that $p(T_u)\geq 2p(T_x)$, since $p(T_v)\geq p(T_x)$ and $p(T_u)\geq p(T_v)+p(T_x)+p(u)$. Thus,
	\begin{equation*}
		\abs{2p(T_x)-p(T)}=p(T)-2p(T_x).
	\end{equation*}
	Moreover, for any descendant $y$ of $x$, we have $p(T_x)\geq p(T_y)$, which indicates that
	\begin{equation*}
		\abs{2p(T_y)-p(T)}=p(T)-2p(T_y)\geq p(T)-2p(T_x).
	\end{equation*}
	
	Now, we consider the node $v$. There are two cases: (i) $p(T)\geq 2p(T_v)$ and (ii) $p(T)< 2p(T_v)$. For case (i), obviously,
	\begin{equation*}
		\abs{2p(T_v)-p(T)}=p(T)-2p(T_v)\leq p(T)-2p(T_x).
	\end{equation*}
	For case (ii), as $p(T_v)+p(T_x)\leq p(T)$, it also holds that
	\begin{equation*}
	\abs{2p(T_v)-p(T)}=2p(T_v)-p(T)\leq p(T)-2p(T_x).
	\end{equation*}
	
	Putting it together, we have
	\begin{equation*}
		\abs{2p(T_v)-p(T)}\leq \abs{2p(T_x)-p(T)}\leq \abs{2p(T_y)-p(T)}.
	\end{equation*}
	This implies that $v$ dominates all other children of $u$ as well as all their descendants. Hence, starting with $u=r$, we just keep the subtree rooted at the child $v$ of $u$ such that the edge $(u,v)$ is heavy, and meanwhile remove all subtrees rooted at the other children of $u$, since all nodes in latter are dominated by $v$. We then set $u=v$ and repeat the process until $v$ is a leaf node. This procedure actually generates the weighted heavy path $H(T,r)$ containing the root $r$ of $T$. Therefore, by definition, the node $u\in H(T,r)$ that minimizes $\abs{2\cdot p(T_u)-p(T)}$ is the middle point of $T$.
\end{proof}

\begin{algorithm}[!tbp]
	\setlength{\hsize}{0.95\linewidth}
	\caption{$\greedyG(G)$}
	\label{alg:bigs_g}
	\KwIn{an input DAG $G$}
	\KwOut{the target node}
	round $w(v) \gets \left\lceil\frac{n^2\cdot p(v)}{\max_{v \in V} p(v)}\right\rceil$ for each $v$\;\label{algline:round-end}
	initialize $r \gets$ root of $G$, and $\tilde{w}(v)\gets w(G_v)$ for each $v$\;\label{algline:weight-initial-end}
	\While{$r$ is not a leaf}{
		initialize $q \gets r$, queue $S \gets \{r\}$, and $min\_w \gets +\infty$\;\label{alg:bigs_g:bfs_1}
		\While{$S$ is not empty}{
			pop out a node from $S$ as $u$\;
			\ForEach{child $v$ of $u$}{
				\If{$\abs{2\tilde{w}(v)-\tilde{w}(r)} < min\_w$}{
					$q \gets v$\;
					$min\_w\gets \abs{2\tilde{w}(v)-\tilde{w}(r)} $\;
				}
				\lIf{$2\tilde{w}(v)>\tilde{w}(r)$\label{alg:bigs_g:early_stop}}{insert $v$ into $S$\label{alg:bigs_g:bfs_2}}
			}
		}
		\lIf{$query(q)=yes$}{$r \gets q$\label{alg:bigs_g:adjust_root}}
		\Else{
			\lForEach{$v \in G_q$}{$\adjust(G,v)$\label{alg:bigs_g:adjust_weights}}
			delete every node in $G_q$ from $G$\;\label{alg:bigs_g:delete_node}
		}
	}
	\Return{node $r$\;}
\end{algorithm}

Based on the Theorem~\ref{thm:Middle_Point_on_Heavy_Path}, we can find the middle point by considering the nodes on the weighted heavy path to avoid enumerating all nodes in the whole tree. Algorithm~\ref{alg:BIGS_T} presents our improved algorithm for the instantiation of the greedy policy for AIGS on trees, namely \greedyT. Initially, it invokes \getpDFS (Algorithm~\ref{alg:setweight}) by performing one round depth-first search (DFS) starting from the root $r$ of $T$ to compute $\tilde{p}(v)=p(T_v)$ and $size(v)=\abs{T_v}$ for every node $v \in T$ (Line~\ref{algline:precompute}). To find the middle point, \greedyT traverses the weighted heavy path using a top-down strategy starting from the root of tree (Lines~\ref{alg:BIGS_T:find_middle_1}--\ref{alg:BIGS_T:find_middle_2}). To update the candidate tree after we get the answer, instead of being fully recomputed, these weights are incrementally updated only if the answer to the query is \textit{no} (Lines~\ref{alg:BIGS_T:update}--\ref{alg:BIGS_T:update_weights_end}). That is, for every node $v$ on the path from the root $r$ to the query node $q$, its weight $\tilde{p}(v)$ will be decreased by a value of $\tilde{p}(q)$ after removing subtree $T_q$, while the weight $\tilde{p}(u)$ remains the same for any $u$ not on this path. On the other hand, if the answer is \textit{yes}, the algorithm just simply sets the query node $q$ as the new root $r$.

\spara{Time Complexity} Computing the weight of each subtree via \getpDFS (Algorithm~\ref{alg:setweight}) takes $O(n)$ time by performing one round DFS. Denote by $d$ the maximum out-degree of the nodes and by $h$ the length of the longest path in $T$. In each round of query, \greedyT starts from the root and goes down along the weighted heavy path for at most $h$ layers and the node in each layer has at most $d$ children. Thus, the time complexity for the query node selection in each round is $O(hd)$. After obtaining the query response, it takes $O(h)$ time to update the weights for all ancestors of the query node. Therefore, the total time complexity of \greedyT is $O(nhd)$.\footnote{If a max-heap is used to store the weights of the children of each node, one can verify that \greedyT just takes $O(nh\log d)$ time.}

\subsection{Efficient Instantiation on DAG}
We then move to the general case when the hierarchy is a DAG. Algorithm~\ref{alg:bigs_g} instantiates the rounded greedy policy with $2(1+3\ln n)$-approximation (see Theorem~\ref{thm:round-greedy}), namely \greedyG. It first initializes $\tilde{w}(v)=w(G_v)$ by \getp (Algorithm~\ref{alg:get_subgraph_weight}) with respect to the rounded weight $w(\cdot)$ (Lines~\ref{algline:round-end}--\ref{algline:weight-initial-end}). Then, it goes down from the root to search for the middle point by BFS (Lines~\ref{alg:bigs_g:bfs_1}--\ref{alg:bigs_g:bfs_2}). Specifically, if the total weight $\tilde{w}(v)$ of reachable nodes from $v$ satisfies $2\tilde{w}(v)\leq \tilde{w}(r)$, where $r$ is the root of current candidate DAG, then $v$ dominates all its descendants. As a result, all descendants of such a node $v$ are skipped. Upon receiving the query answer, the graph updates the weights of $v$'s ancestors for each $v\in G_q$ only if the answer is \textit{no} (Lines~\ref{alg:bigs_g:adjust_root}--\ref{alg:bigs_g:delete_node}). In particular, it calls \adjust (Algorithm~\ref{alg:bigs_g_reverse_deletion}) to update the weights by performing a reverse BFS from each $v\in G_q$.

\begin{algorithm}[!tbp]
	\caption{$\adjust(G,v)$}
	\label{alg:bigs_g_reverse_deletion}
	\KwIn{a DAG $G$ and the node $v$ to be deleted}
	initialize a queue $S$ and insert the node $v$ into $S$\;
	mark every node in $G$ as \textit{unvisited}\;
	\While{$S$ is not empty}{
		pop out a node from $S$ as $u$\;
		\ForEach{each parent $v$ of $u$}{
			\If{$v$ is not visited}{
				$\tilde{w}(v) \gets \tilde{w}(v)-w(v)$\;
				insert $v$ into $S$ and mark $v$ as \textit{visited}\;
			}
		}
	}
\end{algorithm}

\spara{Time Complexity} The total time complexity for computing $\tilde{w}(v)$ for every node is $O(nm)$. For each query, it takes $O(m)$ time to identify the middle point. Since there will be at most $n$ rounds of finding the middle point, the total time used for identifying query nodes is $O(nm)$. Finally, to remove a node from the graph after receiving the query result, it takes $O(m)$ time to perform a reverse BFS to update the graph via \adjust (Algorithm~\ref{alg:bigs_g_reverse_deletion}). Since it at most removes $n$ nodes, the total update time is $O(nm)$. Therefore, the total time complexity of \greedyG is $O(nm)$, which significantly improves the \greedyN algorithm of $O(n^2m)$.

\section{Experiments}\label{sec:experiment}
In this section, we evaluate the performance of our proposed approach with extensive experiments on two real-world datasets. 
The key metrics used for comparison are cost (i.e., the number of queries) and running time.
All experiments are conducted on a machine with an Intel i7-7700 CPU and 32GB RAM. All the algorithms are implemented in Python.

\subsection{Experimental Setting}\label{sec:exp:setting}

\spara{Datasets} Similar to previous work~\cite{li2020efficient,Tao:IGS}, we test our algorithms using the following two real-world datasets, including Amazon and ImageNet. Table~\ref{tab:dataset_attributes} summarizes some important attributes of the tested datasets.
\begin{itemize}[topsep=2mm, partopsep=0pt, itemsep=1mm, leftmargin=18pt]
	\item Amazon~\cite{he2016ups}. This is a  dataset including the product hierarchy at Amazon with a tree structure. This dataset contains information of products sold at Amazon, like reviews and product metadata. Specifically, the record has a field named \textit{categories}, and we can consider this field as a path starting from the root of the hierarchy to this product category. By combining these paths together, we can get a tree hierarchy with 29,240 nodes.
	
	\item ImageNet~\cite{deng2009imagenet}. This is a large-scale hierarchical image dataset using the structure of WordNet~\cite{miller1998wordnet}. Each category is represented as a tag called as \textit{synset} in the XML file (\url{https://www.imagenet.org/api/xml/structure_released.xml}), and the subcategory is given inside each tag explicitly. The attribute \textit{wnid} of each tag assigns an unique id to each category. We note that the category with $\textit{wnid}=\text{``fa11misc''}$ contains miscellaneous images that do not conform to WordNet. We extract all categories except the one with $\textit{wnid}=\text{``fa11misc''}$ and get a DAG with 27,714 nodes.
\end{itemize}

\spara{Metrics} The primary metric evaluated in this section is the \textit{cost}, i.e.,~the expected number of queries. Moreover, to evaluate the efficiency of proposed instantiations of the greedy policy, a comparison of running time will be included as well.

\begin{table}[!tbp]
	\caption{Statistics of datasets.}
	\label{tab:dataset_attributes}
	\centering
	\setlength{\tabcolsep}{0.6em} 
	\renewcommand{\arraystretch}{1.1}
	\begin{tabular}{lrcccc}
		\toprule
		\textbf{Dataset} & \textbf{\#nodes} & \textbf{Height} & \textbf{Max Deg.} & \textbf{Type} & \textbf{\#objects}\\
		\midrule
		Amazon & 29,240 & 10 & 225 & Tree & 13{,}886{,}889 \\
		ImageNet & 27,714 & 13 & 402 & DAG & 12{,}656{,}970\\
		\bottomrule
	\end{tabular}
\end{table}

\spara{Competing Algorithms} 
We compare our \greedyT (on Amazon) and \greedyG (on ImageNet) algorithms against three baselines, including the native \topdown method, the heavy-path-based binary search method for the WIGS problem proposed by \citet{Tao:IGS}, referred to as \igs, and the search method using multiple-choice query proposed by \citet{li2020efficient}, referred to as \migs. For \migs~\cite{li2020efficient}, we consider the number of choices read by the crowd as the cost, since a $k$-choice query can be decomposed to $k$ binary queries.

\subsection{Experimental Results}\label{sec:exp:query_complexity}
\begin{table}[!tbp]
	\centering
    \caption{Cost under real data distribution.}
    \label{tab:TotalQueries_RealDistribution}
	\setlength{\tabcolsep}{0.6em} 
	\renewcommand{\arraystretch}{1.1}
	\begin{tabular}{lcccc}
		\toprule
	 	\textbf{Dataset} & {\topdown} & \MIGS & \igs & \textbf{\greedyT/\greedyG}\\
		\midrule
        Amazon & 92.23 & 89.19 & 37.35 & \textbf{21.02}\\
        ImageNet & 101.18 & 96.28 & 30.18 & \textbf{22.29}\\
		\bottomrule
	\end{tabular}
\end{table}

\subsubsection{Comparison of Cost}
We first compare the cost of different algorithms. In the experiments, we count the number of objects in each category to obtain the real probability distribution. However, the real probability distribution may not be known in practice. To tackle this issue, we apply a simple online learning approach that dynamically adjusts the empirical probability distribution on the fly upon obtaining the category result of each object. Furthermore, to demonstrate the robustness of our algorithms, we evaluate several synthetic probability distributions, including both homogeneous and heterogeneous probability settings.

\spara{Real Data Distribution}
Table~\ref{tab:TotalQueries_RealDistribution} shows the average number of queries asked by different algorithms for locating the target node given the a-priori known real data distribution. The result shows that our \greedyT and \greedyG methods significantly outperform all the three baselines. Specifically, compared with the naive \topdown method, our \greedyT and \greedyG algorithms save $77.21\%$ and $77.97\%$ cost on the Amazon and ImageNet datasets, respectively, while compared with \migs, our \greedyT and \greedyG algorithms save $76.43\%$ and $76.85\%$ cost on the two datasets. Meanwhile, we observe that, although \igs outperforms both \topdown and \migs, our \greedyT and \greedyG algorithms still achieve remarkable improvements of $43.72\%$ and $26.14\%$ on the two datasets. These results demonstrate the superiority of our proposed algorithms. 

Interestingly, our experiment results show that \topdown and \MIGS incur comparable cost. This is because the goal of \citet{li2020tight} is to minimize the expected number of multiple-choice queries. However, the workload to answer a multiple-choice query may vary significantly. For example, if \MIGS queries on the root node of ImageNet, a question with around $100$ choices is asked, while some other queries may only have a few choices. Hence, if we take into account the number of choices read by the crowd, we will find \MIGS can only reduce the crowd workload slightly by around $5\%$ compared with \topdown.

\spara{Learning Distribution on the Fly}
Instead of assuming the a-priori known real data distribution, we learn the empirical distribution on the fly. That is, when we label the $i$-th object, we use the statistics of the first $(i-1)$ labeled objects as the input probability distribution. At the very beginning when no object is labeled, we assume that all categories occur with an equal probability. We record the average cost for every 10 thousand objects. Note that when the objects arrive in different sequences, our algorithms will incur different costs. We generate 20 traces by randomly shuffling the objects and report the average results. For the sake of visualization, we include \igs, \greedyT and \greedyG using the offline distribution extracted from the real data as baselines for comparison, since the costs of \topdown, \migs are significantly higher. \figurename~\ref{fig:AveragePastCost_vs_progressness} shows the results. We observe that the average costs of the baselines, including \igs, \greedyT and \greedyG using the offline data distribution as an input, are almost the same as the values in Table~\ref{tab:TotalQueries_RealDistribution}. The reason is that the baseline algorithms do not rely on the online learned distribution and the distribution of every $10$ thousand objects roughly remains the same. As a contrary, the average cost of \greedyT and \greedyG using the online learned distribution decreases along with the number of objects labeled, and gradually converges to that using the real data distribution. Intuitively, the more categorization results we get, the more accurately we can predict the distribution of the future objects that will be labeled, which will eventually help reduce the cost. Interestingly, we find that although both datasets have more than 10 million objects, our algorithms equipped with a practical online learning approach perform very close (with a difference less than $3\%$) to those using the real data distribution just after obtaining the categories of 50 thousand objects.

\begin{figure}[!tbp]
	\centering
	\subfloat[Amazon]{\includegraphics[width=0.243\textwidth]{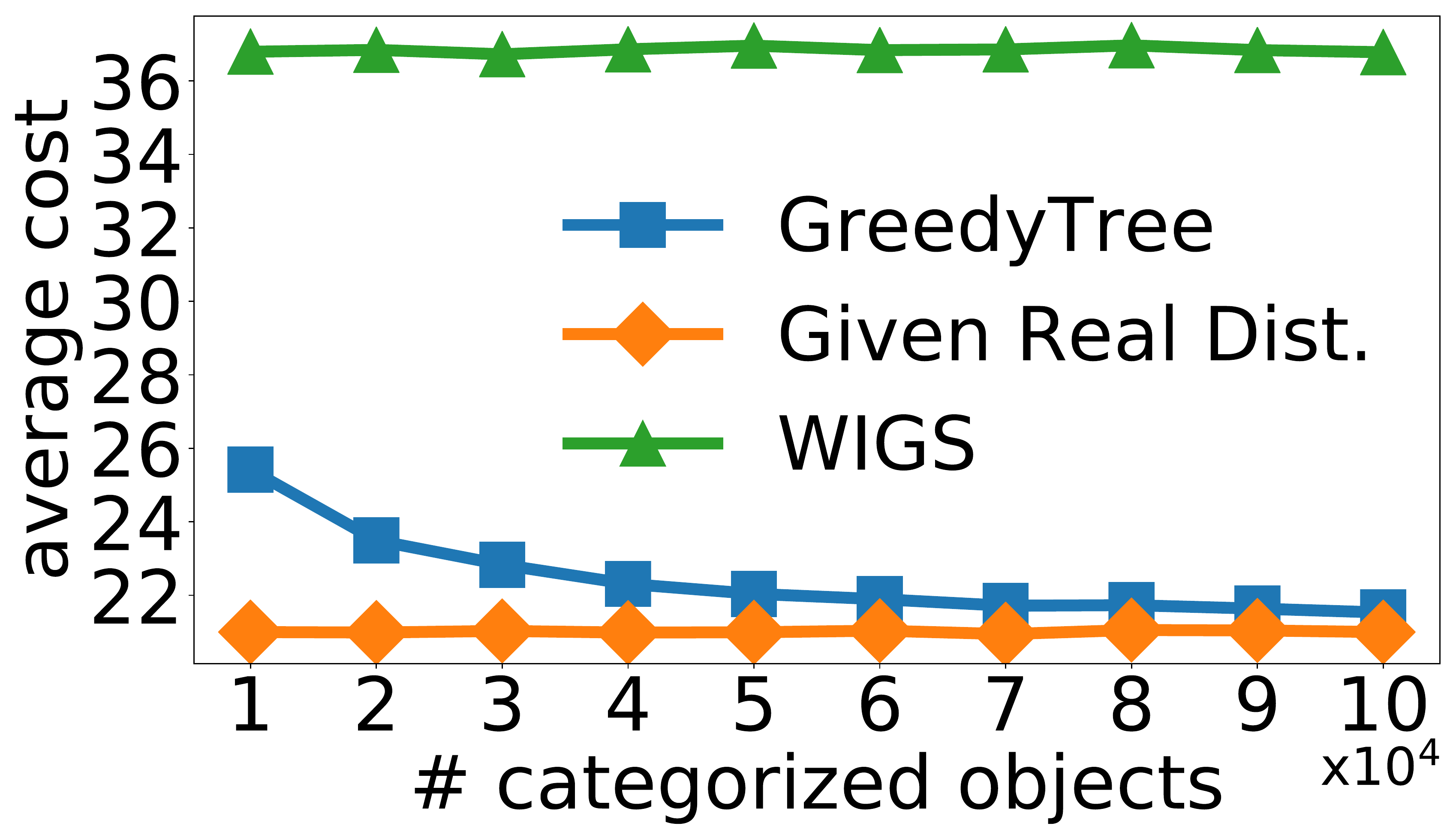}}\hfil
	\subfloat[ImageNet]{\includegraphics[width=0.243\textwidth]{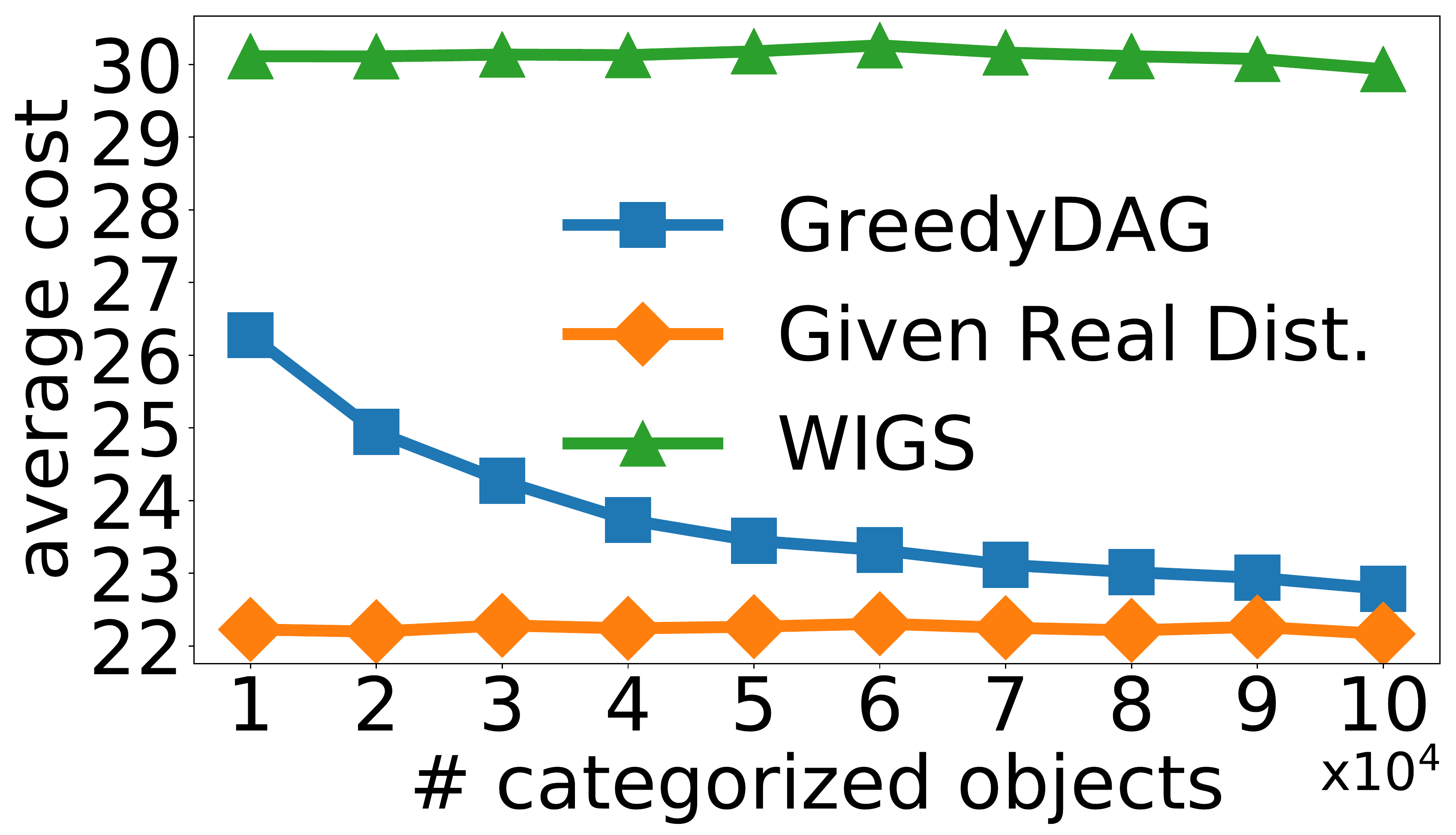}}
	\caption{Average cost vs.\ the number of categorized objects.}
	\label{fig:AveragePastCost_vs_progressness}
\end{figure}

\spara{Synthetic Data Distribution} To further evaluate the impact of data distribution, we use several representative synthetic probability distributions, including the unweighted setting and three weighted settings. For the unweighted setting, each node in the category hierarchy will be the target node with an equal probability, i.e., $p(v)=1/n$ for each $v$. For the weighted settings, we first randomly assign a value $x_v$ to each node $v$ according to certain distributions, including uniform distribution, exponential distribution, and Zipf distribution~\cite{zipf1932selected}, and then normalize $x_v$ as $v$'s occurring probability, i.e.,~$p(v)=\frac{x_v}{\sum_{v}x_v}$. Specifically, for uniform distribution, we first generate a random number $x_v$ uniformly distributed in the range of $(0,1)$ and then assign a node with the probability~$p(v)=\frac{x_v}{\sum_{v}x_v}$. Similarly, for exponential distribution, the random number is generated from the distribution $\operatorname{Exp}(1)$, while for Zipf distribution, the probability density function is $f(x; a)=\frac{x^{-a}}{\zeta(a)}$, where $\zeta$ is the Riemann Zeta function and $a$ is the parameter which is set as $a=2$ by default. Note that Zipf distribution is a long-tail probability distribution. Since each node is assigned with a random probability and the probability will influence the performance, we repeat the experiments 20 times and report the average result.

\begin{table}[!tbp]
    \centering
    \caption{Cost under several probability settings on Amazon.}
    \label{tab:TotalQueries_tree}
    \setlength{\tabcolsep}{0.6em} 
    \renewcommand{\arraystretch}{1.1}
	\begin{tabular}{lccccc}
	 	\toprule
	 	\textbf{Distribution} & {\topdown} & \MIGS & \igs & \textbf{\greedyT}\\
	 	\midrule
	 	Equal & 81.17 & 80.81 &27.42 & \textbf{25.35}\\
        Uniform & 81.28 & 81.19 & 27.47 & \textbf{23.68}\\
        Exponential & 82.42 & 81.65 & 27.37 & \textbf{22.70}\\
        Zipf & 82.09 & 81.94 & 27.55 & \textbf{14.03}\\
        \bottomrule
    \end{tabular}
\end{table}

\begin{table}[!tbp]
	\centering
    \caption{Cost under several probability settings on ImageNet.}
    \label{tab:TotalQueries_DAG}
	\setlength{\tabcolsep}{0.6em} 
	\renewcommand{\arraystretch}{1.1}
	\begin{tabular}{lcccc}
		\toprule
	 	\textbf{Distribution} & {\topdown} & \MIGS & \igs & \textbf{\greedyG}\\
		\midrule
		Equal & 123.31 & 126.12 &34.56 & \textbf{31.48}\\
        Uniform & 125.82 & 124.66 & 34.55 & \textbf{28.66}\\
        Exponential & 125.41 & 127.39 & 34.57 & \textbf{27.00}\\
        Zipf & 125.24 & 133.48 & 34.74 & \textbf{14.41}\\
		\bottomrule
	\end{tabular}
\end{table}

Table~\ref{tab:TotalQueries_tree} and Table~\ref{tab:TotalQueries_DAG} show the expected costs of different algorithms with various probability settings on the Amazon and ImageNet datasets, respectively. We observe that our \greedyT and \greedyG algorithms consistently and significantly outperform \topdown, \migs and \igs for all the four probability settings tested. Moreover, we find that the costs of \topdown and \igs almost remain the same for different distributions. The reason is that these baseline algorithms do not take into account the probability distribution and the expected probability associated with each node is the same as the generated probability is an independent and identically distributed random variable. On the contrary, the costs of our \greedyT and \greedyG methods are lower if the distribution of probability is more skewed. That is, the costs of \greedyT and \greedyG under the Zipf distribution are substantially less than those under the exponential distribution, and are in turns smaller than those under the uniform distribution, and are in turns further less than those under the unweighted setting. In particular, under the Zipf distribution where a few nodes have extremely large probabilities, the improvement percentage of \greedyT and \greedyG over \igs reaches $59.07\%$ and $58.52\%$ on the Amazon and ImageNet datasets, respectively, whereas for the unweighted setting, the reduction on cost is around $10\%$. These results demonstrate the superiority of our cost-effective design for the AIGS problem.

The experiment results in Table~\ref{tab:TotalQueries_tree} and Table~\ref{tab:TotalQueries_DAG} shows that a skewed probability distribution is in favor of our approach. We further confirm this observation by testing different parameters $a$ for the Zipf distribution. That is, the smaller the value parameter $a$ is, the more biased the distribution of probability is. \figurename~\ref{fig:weightedIGS_performance_zipf} gives the result. It clearly shows that the costs of \greedyT and \greedyG increase along with the parameter $a$ of the Zipf distribution, and finally approaches the costs when each node has an identical probability. This is because if a node is associated with a high probability, our approach will ask the question that can directly lead to this node, which can avoid many unnecessary queries.

\begin{figure}[!tbp]
	\centering
	\subfloat[Amazon]{\includegraphics[width=0.243\textwidth]{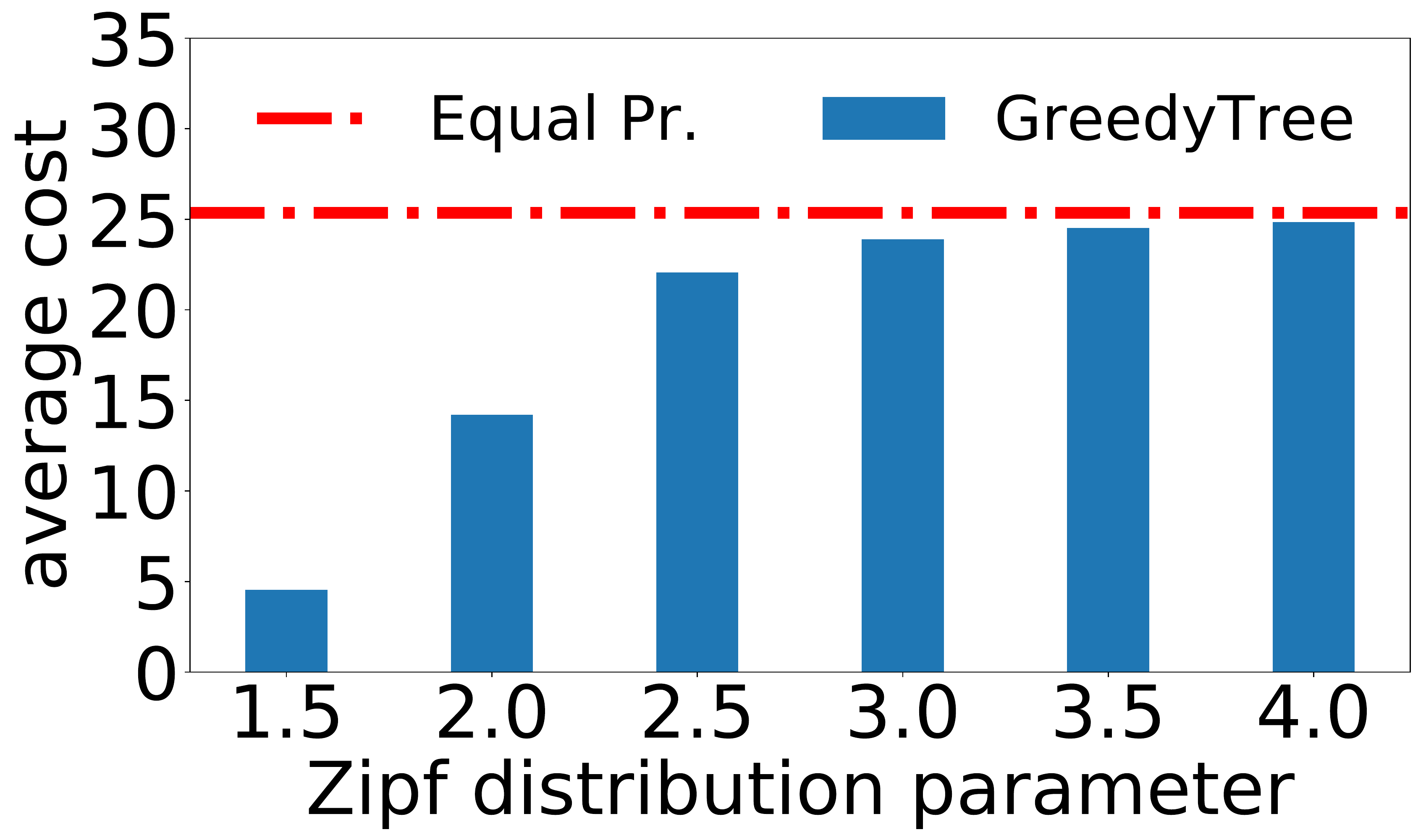}}\hfil
	\subfloat[ImageNet]{\includegraphics[width=0.243\textwidth]{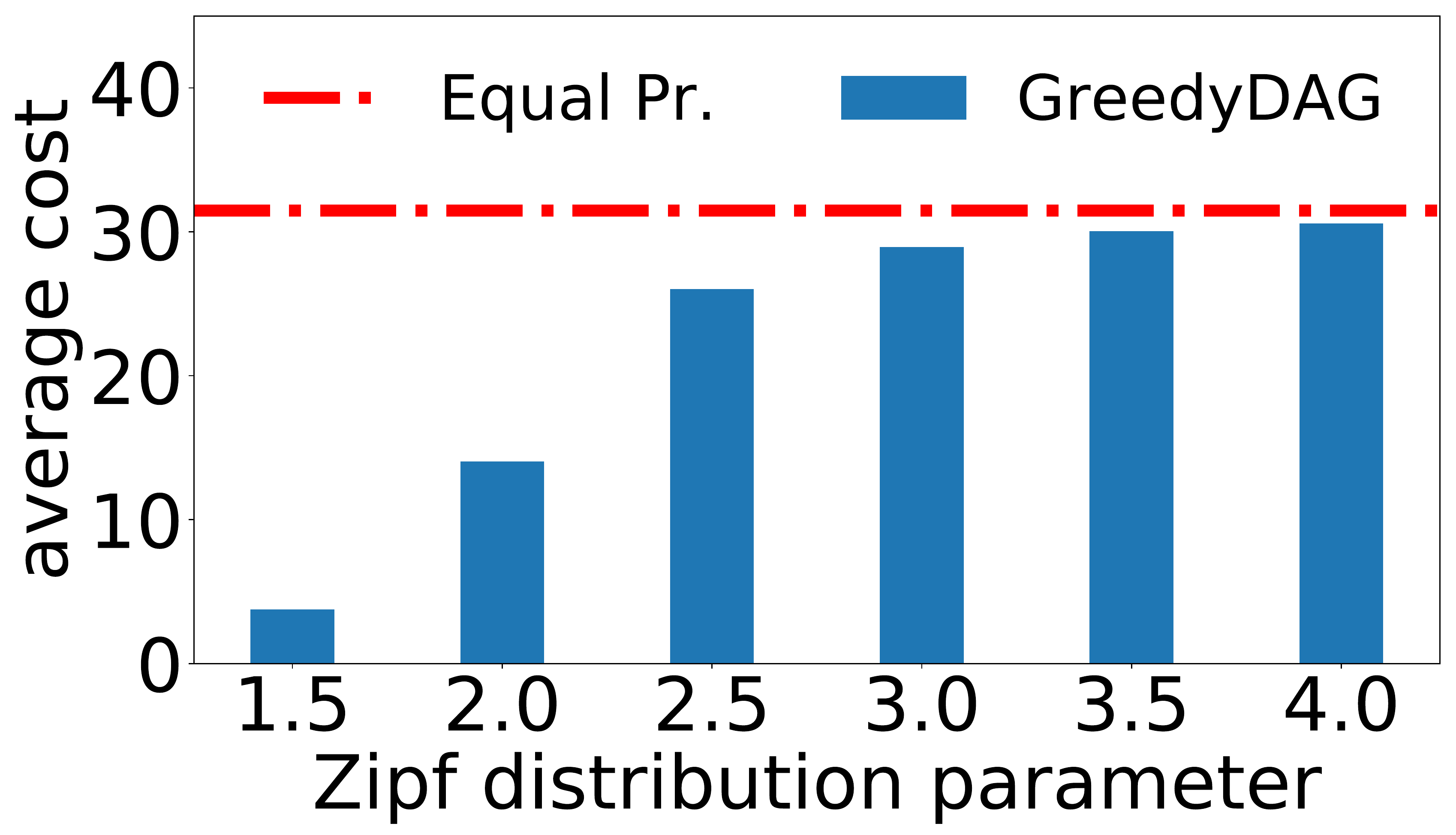}}
	\caption{Cost vs.\ parameter of Zipf distribution.}
	\label{fig:weightedIGS_performance_zipf}
\end{figure}

\subsubsection{Comparison of Running Time}\label{sec:exp:time_evaluation}
In this section, we evaluate the efficiency of our \greedyT and \greedyG algorithms and consider the \greedyN algorithm as a baseline for comparison. We randomly select $1{,}000$ nodes in each depth as the target nodes and compute the average running time. Note that one node may be chosen multiple times. For example, in depth $0$, there is only one node, i.e.,~the root, and the root will be selected $1{,}000$ times as the target node. \figurename~\ref{fig:time_uniformIGS_vs_layers} shows the average running time varying along with the node depth. On the Amazon dataset with a tree hierarchy, our \greedyT algorithm runs three orders of magnitudes faster than the \greedyN algorithm. On the ImageNet dataset with a general DAG hierarchy, our \greedyG algorithm is also noticeably faster than the \greedyN algorithm. These results demonstrate the efficiency of our proposed approach.
\section{Related work}
\subsection{Human-Based Computation}
Our work is related to human-based computation, which aims to solve tasks that are more challenging for computers but are easier to handle for humans. One idea is to ask for some information from humans and tackle the challenge with human intelligence. Crowdsourcing has emerged as a major technique that enables programmers to employ humans to solve these problems. There have been many works in the database area to study how to use human assistance to process data~\cite{marcus2011crowdsourced,franklin2011crowddb,liu2012cdas}, such as SQL-like query processing ~\cite{davidson2013using,park2012deco,karger2011human,li2017cdb}, retrieving the maximum item from a set~\cite{venetis2012max}, and finding the highest-ranked object~\cite{guo2012so}. Moreover, crowdsourcing is also applied in some other applications like data integration and data cleaning~\cite{wang2012crowder}, best path selection in geo-positioning services~\cite{zhang2014crowd}, and filtering data based on a set of properties~\cite{parameswaran2012crowdscreen}. Some similar tasks also occur in the machine learning area. In (supervised) machine learning, some algorithms require labeled data for training. However, the labeled data may be difficult or expensive to obtain. In active learning, the algorithm can choose the data from which it learns knowledge and requests the corresponding labels from human~\cite{dasgupta2004analysis,settles2009active}. This task is similar to our work in the sense that they all incorporate humans into the computation.

\subsection{Interactive Graph Search}
Interactive graph search (IGS) aims to find the target node on DAG with the assist of humans. The representative application of IGS is object categorization~\cite{chakrabarti2004automatic,li2020efficient}, hierarchy creation~\cite{chilton2013cascade,sun2015building}, and faceted search~\cite{basu2008minimum}. This problem is originally proposed by~\citet{Tao:IGS} and they devised algorithms using the heavy-path-based binary search technique with near-optimal theoretical bounds considering the \textit{worst-case} cost, referred to as WIGS which aims to minimize the cost in the worst-case. Prior to this work, \citet{parameswaran2011human} proposed the human-assisted graph search problem (HumanGS), which studied a graph search problem similar to IGS under the offline setting. Recently, \citet{li2020efficient} studied a variant IGS problem by asking multiple-choice questions and assuming the input hierarchy is a tree, namely MIGS. \citet{zhu2021budget} later studied the budget constrained interactive graph search for multiple targets, which considers that there are more than one target nodes on the hierarchy and the budget may not allow us to find the target nodes exactly. The techniques developed for these problems are ineffective for addressing our AIGS problem, due to the differences in problem definitions.

\begin{figure}[!tbp]
	\centering
	\subfloat[Amazon]{\includegraphics[width=0.243\textwidth]{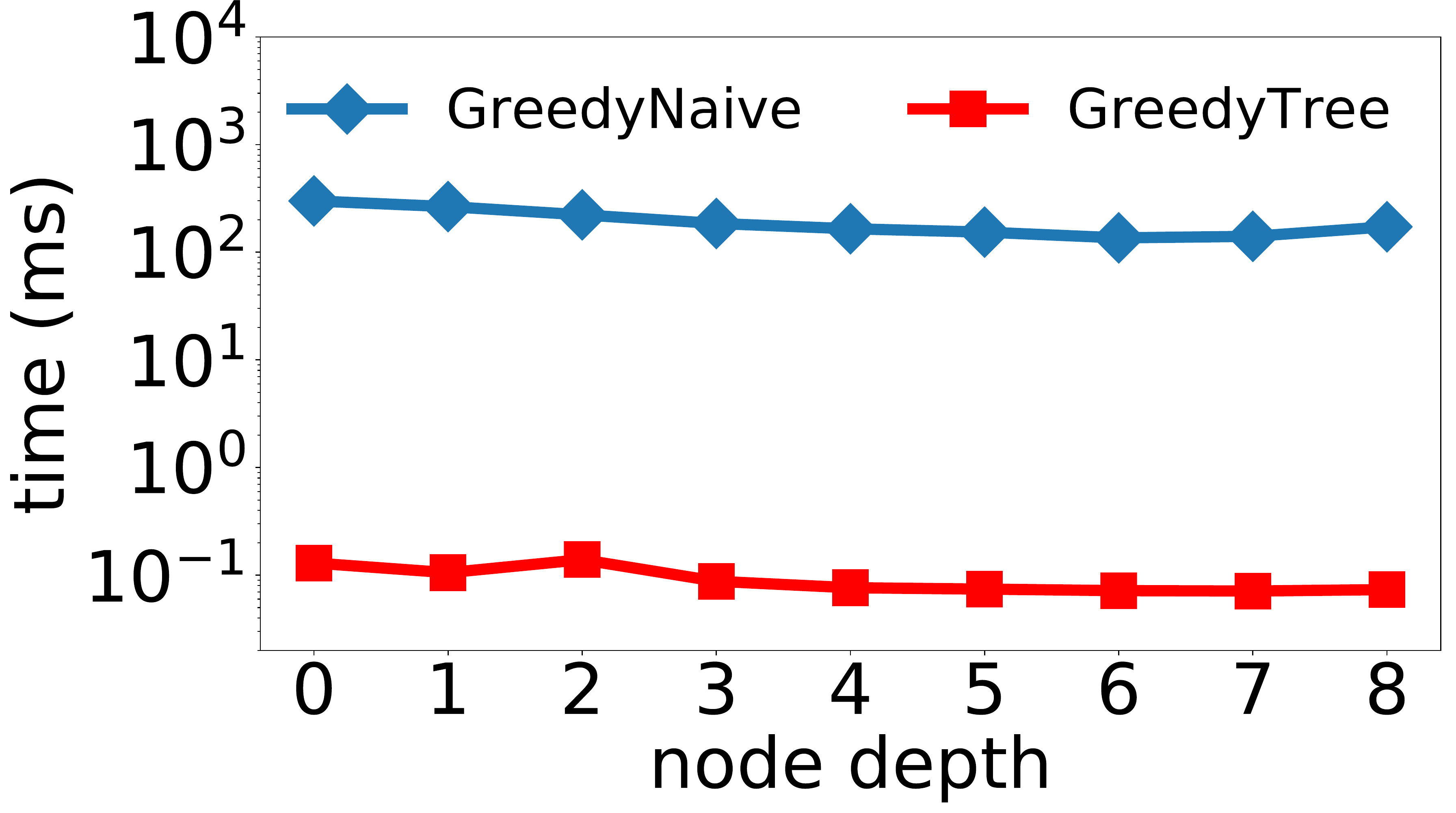}}\hfil
	\subfloat[ImageNet]{\includegraphics[width=0.243\textwidth]{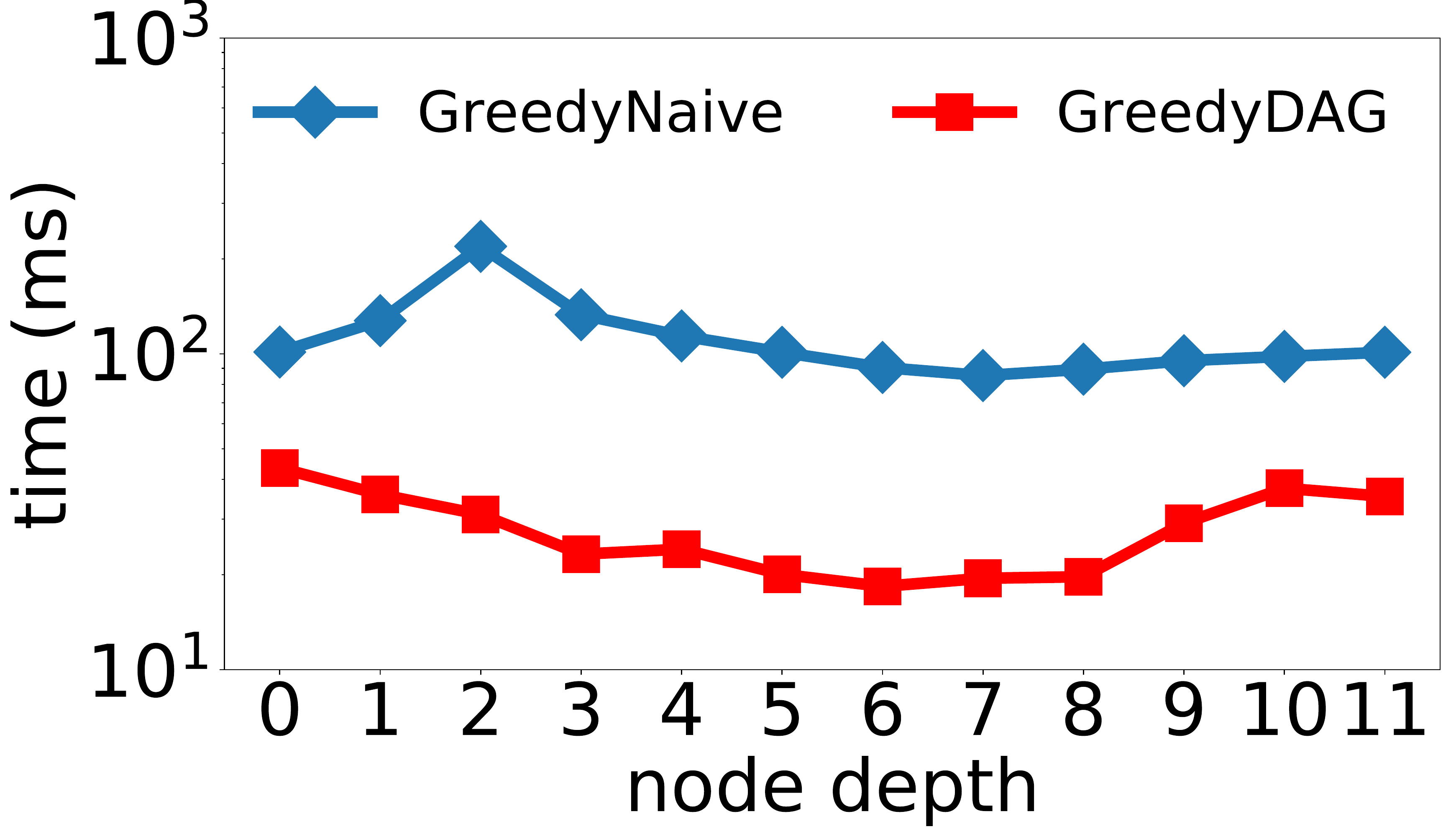}}
	\caption{Running time.}
	\label{fig:time_uniformIGS_vs_layers}
\end{figure}

\subsection{Partially Ordered Set}
From an algorithmic perspective, the IGS problem has a close relationship with the search in a partially ordered set (poset) problem. This is an extension of the well-studied problem of search in a fully ordered set~\cite{ArtOfProgramming-Searching}. Most of the researchers focus on the tree-like poset and reduce the number of comparisons in the \textit{worst-case}, which is equivalent to the WIGS~\cite{Tao:IGS} problem given the input hierarchy is a tree. It has been proved that this problem is equivalent to finding the edge ranking of a simple tree corresponding to the Hasse diagram~\cite{dereniowski2008edge}. There has been an efficient algorithm in linear time to solve this problem~\cite{ben1999optimal,onak2006generalization,mozes2008finding}. However, optimizing the worst-case cost for the general poset, i.e., WIGS~\cite{Tao:IGS} given a DAG hierarchy, is shown to be NP-hard, and there exists an $O\left(\frac{\log n}{\log\log n}\right)$-approximate polynomial-time algorithm~\cite{dereniowski2008edge}. In contrast to the problem of optimizing the worst-case cost, minimizing the average-case cost for search in poset is more challenging, which is NP-hard even if the poset has a tree structure~\cite{cicalese2011complexity}. In fact, it has been proved that there is no $o(\log n)$-approximate algorithm unless $\mathrm{NP} \subseteq \mathrm{DTIME}\left(n^{O(\log\log n)}\right)$~\cite{cicalese2011complexity}. However, surprisingly, the simple greedy method can provide a constant approximate guarantee for the tree-like poset~\cite{laber2011approximation,cicalese2014improved}.

In addition, \citet{dereniowski2017approximation} studied a problem similar to MIGS and proposed a quasi-polynomial time approximation scheme, while they tried to minimize the cost in the worst-case. \citet{bose2020competitive} further gave an online $O(\log \log n)$-competitive search tree data structure to such a MIGS-like problem.

\subsection{Decision Tree}
Moreover, as discussed in Section~\ref{sec:analysis}, AIGS is also highly relevant to decision tree. The decision tree problem~\cite{laurent1976constructing,chakaravarthy2007decision} is also known as the split tree problem~\cite{Kosaraju1999optimal} and binary identification problem~\cite{garey1972optimal} in the literature. The task is to find a scheme to unambiguously identify objects with the minimized average number of queries by a set of binary tests. The decision tree problem is a classic formulation of variant problems such as active learning, entity identification and diagnosis~\cite{dasgupta2004analysis}. We can consider the AIGS problem as a special case of the binary decision tree problem, and the search strategy of AIGS can be easily represented as a binary decision tree. It has been proven that computing the optimal solution for the decision tree problem is NP-hard~\cite{laurent1976constructing}. \citet{dasgupta2004analysis} and \citet{Kosaraju1999optimal} showed that the nature greedy algorithm has an approximation ratio of $O(\log(\frac{1}{p_{min}}))$, where $p_{min}$ is the minimum occurring probability of objects. Through some simple modifications, the rounded greedy algorithm~\cite{chakaravarthy2007decision,Kosaraju1999optimal} can easily achieve an approximation ratio of $O(\log n)$ regardless of the occurring probability.

\section{Conclusion and Future Work}\label{sec:future_work}
In this paper, we study the problem of average-case interactive graph search (AIGS). We show that the AIGS problem is NP-hard and propose cost-effective greedy algorithms with provable theoretical guarantees, e.g.,~$(1+\sqrt{5})/2$ if the input hierarchy is a tree and $O(\log n)$ in general. Moreover, we devise efficient instantiations to accelerate the algorithms. With extensive experiments on real data, we show that our solutions considerably outperform the state of the art for AIGS.

There are some possible future research directions. For AIGS, the questions are answered by employees via a crowdsourcing platform. The employees may make mistakes during answering questions, which will introduce some noise to the search process. Existing experiment~\cite{Tao:IGS,li2020efficient} has shown that some noise is even persistent resulting from incomplete or questionable ground truth in the dataset or the subjective judgment from employees. Dealing with the negative influence of noise, especially persistent noise, is a challenge. On the theoretical side, it is interesting to derive better approximation ratios leveraging the structure of category hierarchy, such as height and maximum degree. 

\section*{Acknowledgment}
Jing Tang's work is partially supported by HKUST(GZ) under a Startup Grant. Lei Chen's work is partially supported by National Key Research and Development Program of China Grant No.\ 2018AAA0101100, the Hong Kong RGC GRF Project 16207617, CRF Project C6030-18G, C1031-18G, C5026-18G, RIF Project R6020-19, AOE Project AoE/E-603/18, Theme-based project TRS T41-603/20R, China NSFC No.\ 61729201, Guangdong Basic and Applied Basic Research Foundation 2019B151530001, Hong Kong ITC ITF grants ITS/044/18FX and ITS/470/18FX, Microsoft Research Asia Collaborative Research Grant, HKUST-NAVER/LINE AI Lab, HKUST-Webank joint research lab grants.

\end{sloppy}

\balance
\bibliographystyle{abbrvnat}
\bibliography{reference}


\end{document}